\numberwithin{equation}{section}
\newtheorem{theorem}{Theorem}[section]
\newtheorem{lemma}[theorem]{Lemma}
\newtheorem{definition}[theorem]{Definition}
\newtheorem{claim}[theorem]{Claim}
\newtheorem{construction}[theorem]{Construction}
\newtheorem{introtheorem}{Theorem}
\newcommand{\eps}{\varepsilon}
\newcommand{\bitset}{\{0,1\}}
\newcommand{\tritset}{\{0,1,\bot\}}
\newcommand{\N}{\mathbb N}
\newcommand{\polylog}{\mathrm{polylog}}
\newcommand{\supp}{\mathrm{supp}}
\newcommand{\decrad}{\delta}
\newcommand{\locality}{\ell}
\newcommand{\dist}{\Delta}
\newcommand{\adist}{\bar{\Delta}}
\renewcommand{\deg}[2]{\mathrm{deg}_{#1}(#2)}
\newcommand{\sunflower}{\mathcal{S}}
\newcommand{\petals}{\mathcal{P}}
\newcommand{\good}{\mathcal{G}_i}
\newcommand{\binomparam}{n^{-1/{2\locality'^2}}}
\newcommand{\bp}{1/n^{1/2\locality^2}}
\newcommand{\sbparam}{n^{1-1/2{\locality'^2}}}
\newcommand{\daisy}[1]{$#1$-daisy}
\newcommand{\simpledaisy}{\daisy{1}}
\newcommand{\textdef}[1]{\textnormal{\textsf{#1}}}
\title{A Lower Bound for Relaxed Locally Decodable Codes}
\author{Tom Gur \\ University of Warwick\\ \texttt{tom.gur@warwick.ac.uk} \and
Oded Lachish \\ Birkbeck, University of London \\ \texttt{oded@dcs.bbk.ac.uk}}
\date{}
\begin{document}
\maketitle

\begin{abstract}
A locally decodable code (LDC) $C \colon \bitset^k \to \bitset^n$ is an error correcting code wherein individual bits of the message can be recovered by only querying a few bits of a noisy codeword. LDCs found a myriad of applications both in theory and in practice, ranging from probabilistically checkable proofs to distributed storage. However, despite nearly two decades of extensive study, the best known constructions of $O(1)$-query LDCs have super-polynomial blocklength.

The notion of relaxed LDCs is a natural relaxation of LDCs, which aims to bypass the foregoing barrier by requiring local decoding of nearly all individual message bits, yet allowing decoding failure (but not error) on the rest. State of the art constructions of $O(1)$-query relaxed LDCs achieve blocklength $n = O\left(k^{1+ \gamma}\right)$ for an arbitrarily small constant $\gamma$.

We prove a lower bound which shows that $O(1)$-query relaxed LDCs cannot achieve blocklength $n = k^{1+ o(1)}$. This resolves an open problem raised by Goldreich in 2004.
\end{abstract}

\newpage
\tableofcontents

\newpage
\section{Introduction}
Locally decodable codes (LDC) are fundamental objects in coding theory. Loosely speaking, an LDC is an error correcting code with a robust local-to-global structure which admits a randomised algorithm that can recover individual message bits by probing a minuscule portion of a noisy codeword. Thus, rather than reading the entire codeword to decode the entire message, an LDC allows for reading a small number of locations to decode a single bit of the message.

More precisely, we consider codes $C \colon \bitset^k \to \bitset^n$ with linear distance, and say that a code $C$ is an LDC if there exists a randomised algorithm, called a \textsf{local decoder}, that is given a location $i \in[k]$ and query access to an input $w\in\bitset^n$  such that if $w$ is sufficiently close (typically within distance that is proportional to the distance of the code) to a valid codeword $C(x)$, the decoder outputs $x_i$ with high probability. The maximal number of queries that the decoder makes is called the \textsf{locality} of the code.

Since the systematic study of LDCs was initiated in the seminal work of Katz and Trevisan \cite{KT00}, these codes received much attention and made a profound impact on cryptography, complexity theory, program checking, data structures, quantum information, pseudorandomness, and other areas in theoretical computer science (see surveys \cite{Trevisan04,Yekhanin12,KS17} and references therein), as well as led to significant practical applications in distributed storage \cite{Huang2012}.

Unfortunately, despite the success and attention that LDCs gained in the last two decades, there remains a chasm between the best known upper bound and lower bounds on LDCs. Specifically, the best general lower bounds that are currently known (cf. \cite{KW04,Woodruff12},\footnote{For specific regimes of parameters, some improvements are known, e.g., for $3$-local LDCs, the blocklength must be $\tilde{\Omega}\left(k^{1+\frac{1}{\lceil q/2\rceil - 1}}\right)$  \cite{KW04}.} building on \cite{KT00}), show that any $\locality$-local LDC must have blocklength 
\begin{equation*}
    n = \Omega\left(k^{1+\frac{1}{\locality - 1}}\right), 
\end{equation*}
where throughout, $k$ is the dimension of the code. In stark contrast, the state-of-the-art construction of $O(1)$-local LDCs has a \emph{super-polynomial} blocklength (cf. \cite{Efremenko12}, building on \cite{Yekhanin08}).

The foregoing barrier has led to the study of \emph{relaxed locally decidable codes}, in short ``relaxed LDCs'', which were introduced in the highly influential work of Ben-Sasson, Goldreich, Harsha, Sudan, and Vadhan \cite{BGHSV04}. In a recent line of research \cite{GGK15,GG16a, GG18, GR18,GRR18,Blocki18} relaxed LDCs and their variants (such as relaxed locally correctable codes) have been studied and used to obtain applications to property testing \cite{CG18}, data structures \cite{CGW09}, and probabilistic proof systems \cite{GGK15,GG16a,GR18}.

Loosely speaking, this relaxation of LDCs allows the local decoder to declare ``decoding failure'' on a small fraction of the indices, yet crucially, still avoid errors. More accurately, a \textsf{relaxed LDC} $C \colon \bitset^k \to \bitset^n$ is a code (with linear distance) for which there exists a \textsf{decoding radius} $\decrad$ (typically proportional to the relative distance of the code) and a probabilistic algorithm, called the \textsf{relaxed local decoder} that receives an index $i \in [k]$ and oracle access to a string $w \in\bitset^n$ that is $\decrad$-close to a codeword $C(x)$. The relaxed local decoder is allowed to make a small number of queries to $w$ (typically $O(1)$ queries) and is required to satisfy the following conditions:
\begin{enumerate}
\item \emph{Completeness:} If the input is a valid codeword (i.e., $w = C(x)$), the relaxed local decoder must always output $x_i$.
\item \emph{Relaxed decoding:} Otherwise, with high probability, the decoder must either output $x_i$ or a special ``reject'' symbol $\bot$ (indicating the decoder detected an error and is unable to decode).
\end{enumerate}

As observed in \cite{BGHSV04}, the foregoing two conditions suffice for obtaining a third condition, which guarantees that the relaxed local decoder may only reject (i.e., output $\bot$) on an arbitrarily small fraction of the coordinates. (See \cref{sec:preliminaries} for a formal definition of relaxed LDCs, covering all three conditions.)

This seemingly modest relaxation turns out to allow the usage of extremely powerful tools from the theory of probabilistically checkable proofs (PCPs). Relying on the notion of PCPs of proximity, which they also introduced and constructed, Ben-Sasson et al. \cite{BGHSV04} constructed a relaxed LDC with \emph{nearly-linear length}. More precisely, they showed that for every constant $\gamma>0$ there exists an $O(1)$-local relaxed LDC $C \colon \bitset^k \to \bitset^n$ with nearly-linear blocklength $n = k^{1+\gamma}$. We remark that 15 years later, there is no known construction of relaxed LDCs that improves on \cite{BGHSV04}.

While the aforementioned relaxed LDCs have blocklength that is nearly exponentially shorter than that of any known non-relaxed LDC, they do \emph{not} break the currently known lower bound on non-relaxed LDCs (cf. \cite{KT00}). This led Goldreich \cite{Goldreich04} to raise the following open problem:

\begin{center}
    Do there exist $O(1)$-local relaxed LDCs with blocklength $n = k^{1+o(1)}$?
\end{center}

\subsection{Our results}
Our main contribution resolves the foregoing open problem by providing a strong negative answer. Namely, we prove the following theorem, which shows that $O(1)$-local relaxed LDCs cannot achieve blocklength $n = k^{1+ o(1)}$.

\begin{introtheorem}
    \label{thm:main}
    For any $\locality, \decrad \in \N$, there exists a constant $\alpha = \alpha(\locality,\decrad)$ such that every $\locality$-local relaxed LDC $C \colon \bitset^k \to \bitset^n$ with decoding radius $\decrad$ satisfies $n = \Omega\left(k^{1+ \alpha}\right)$.
\end{introtheorem}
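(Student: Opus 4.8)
The plan is to exploit the rigidity forced by the \emph{completeness} requirement, convert the relaxed-decoding and bounded-rejection guarantees into a quantitative \emph{spreading} property of the decoder's query distributions, and then run a Katz--Trevisan-style argument in which the classical sunflower lemma is replaced by a sharper sunflower-type decomposition --- the \daisy{t} structure. First I would normalise the decoder: an adaptive $\locality$-query decoder can be simulated non-adaptively with $q := 2^{\locality}$ queries, and by $O(1)$-fold repetition one may assume that for every word $w$ that is $\decrad$-close to a codeword $C(x)$, all but a $\rho$-fraction of the indices $i$ satisfy $\Pr[\text{the decoder outputs } x_i] \ge 2/3$, where $\rho$ is a small constant. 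The crucial point is that completeness, being an equality that holds with probability one, makes the decoder rigid on the code: for each $i$ there is a \emph{fixed} distribution $\mu_i$ over query sets $Q \subseteq [n]$ with $|Q| \le q$ and \emph{fixed} predicates $g_{i,Q}\colon \bitset^Q \to \tritset$ such that $g_{i,Q}(C(x)|_Q) = x_i$ for every codeword $C(x)$ and every $Q$ in the support of $\mu_i$. Hence every query set the decoder could use for index $i$ already determines $x_i$ on the code, and --- since a codeword restricted to $Q$ takes at most $2^{|Q|}$ values --- a single $q$-set can determine $x_i$ for at most $q$ distinct indices.

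Next I would extract the spreading property from relaxed decoding. Fix a codeword $C(x)$ and the codeword $C(\overline x)$ of the complementary message $\overline x$, and let $D$ be the set of coordinates on which they differ, which has size at least the distance of $C$; since $x_i \ne \overline x_i$ for every $i$, this construction applies simultaneously to all indices (for non-linear codes one additionally averages over a random choice of $x$). For any subcollection $\mathcal R \subseteq \supp(\mu_i)$ with $\bigl|\bigcup_{Q \in \mathcal R}(Q \cap D)\bigr| \le \decrad n$, flipping $C(x)$ on $\bigcup_{Q \in \mathcal R}(Q \cap D)$ to agree with $C(\overline x)$ produces a word $w$ that is $\decrad$-close to $C(x)$ and satisfies $w|_Q = C(\overline x)|_Q$ for every $Q \in \mathcal R$; the decoder is therefore forced to output $\overline{x_i}$ whenever it draws such a $Q$, so relaxed decoding at index $i$ demands $\sum_{Q \in \mathcal R} \mu_i(Q) \le 1/3$. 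In particular, for every index $i$ and every $K \subseteq D$ with $|K| \le \decrad n$ one gets $\Pr_{Q \sim \mu_i}[\,Q \cap D \subseteq K\,] \le 1/3$: the distribution $\mu_i$ cannot concentrate a constant fraction of its mass on query sets whose trace on a single difference set fits inside a $\decrad n$-coordinate region. Aggregating this over many difference sets --- and using bounded rejection to pass to a large ``good'' set of indices --- yields a $\decrad$-dependent substitute for the smoothness that underlies classical LDC lower bounds; it is necessarily weaker, since a relaxed decoder is allowed to reject rather than being forced to err.

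The heart of the argument is then a combinatorial dichotomy on the labelled hypergraph whose hyperedges are the query sets of the good indices. By the rigidity of completeness, each such hyperedge ``reveals'' the message bit of its index, and each $q$-set reveals at most $q$ bits, so the hypergraph is information-theoretically large while living on only $n$ vertices. If $n$ were as small as $k^{1+o(1)}$, a sunflower-type decomposition would produce a large \daisy{t} --- a sunflower-like family whose pairwise intersections all lie in a single small common kernel, with every kernel coordinate of bounded multiplicity $t$ --- and a daisy with a small kernel directly contradicts the spreading property, since corrupting its kernel would fool many good indices at once. Quantifying the trade-off between kernel size, number of petals, and multiplicity in the daisy decomposition, and threading it through the rigidity and spreading bounds, forces $n = \Omega(k^{1 + \alpha})$ for a positive constant $\alpha$ depending only on $q$ --- hence on $\locality$ --- and on $\decrad$.

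The main obstacle, I expect, is this last step: obtaining a daisy decomposition with the right quantitative behaviour and marrying it to the code. The classical sunflower lemma is far too lossy here --- its bounds degrade super-polynomially in $q$ and would yield only $n = \Omega(k)$, which is worthless --- so one needs a decomposition that simultaneously keeps the kernel small (so the spreading property still bites), keeps the multiplicity bounded (so the count is not swamped), and retains more than $\rho k$ petals (so enough good indices survive to reach a contradiction). A secondary difficulty appears already in the spreading step: upgrading ``every fully corrupted query set is forced to err'' to a statement robust against query sets that only partially meet the corrupted region --- this is precisely where the slack between the decoding radius $\decrad$ and the distance of $C$ is spent, and it is why the exponent $\alpha$ must degrade with $\decrad$.
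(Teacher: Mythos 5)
Your preprocessing (adaptive $\to$ non-adaptive at cost $2^\locality$, error reduction by repetition) matches the paper, and your ``rigidity'' observation from perfect completeness --- that $g_{i,Q}(C(x)|_Q)=x_i$ for every codeword and every $Q\in\supp(\mu_i)$, so corrupting $w$ to agree with $C(\overline x)$ on $Q\cap D$ \emph{forces} output $\overline x_i$ --- is correct and is implicitly what drives the paper's soundness claim. The ``spreading'' statement you derive, $\Pr_{Q\sim\mu_i}[Q\cap D\subseteq K]\le 1/3$ for every $|K|\le\decrad n$, is also valid. But the overall route you then take --- a Katz--Trevisan-style direct count where the daisy decomposition is supposed to collide with spreading --- is not the paper's argument, and as you yourself flag, the final step is not actually carried out. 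The gap is real: your spreading property only says that at least a $2/3$-fraction of $\mu_i$'s mass escapes any fixed $\decrad n$-sized region of the difference set $D$; it does \emph{not} give the per-coordinate smoothness that Katz--Trevisan counting needs (a single coordinate of $D\setminus K$ may still be queried with probability $1$), nor is it evident how a daisy with a small kernel contradicts it. You also slightly misstate the daisy structure --- the bounded multiplicity $t$ is on points \emph{outside} the kernel, and pairwise intersections need not lie in the kernel --- which matters for the quantitative trade-offs you would need.

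The paper resolves this by a different mechanism: it does not seek a direct combinatorial contradiction, but instead builds a sample-based \emph{global decoder} $G$ that, given oracle access to a \emph{valid} codeword $C(x)$, queries each coordinate independently with probability $p=n^{-1/(2\locality'^2)}$ and recovers the entire message $x$ with high probability using only $O(n^{1-1/(2\locality'^2)})$ queries; the lower bound then falls out of the trivial information-theoretic fact that recovering $k$ bits requires $\Omega(k)$ queries (Claim~\ref{clm:wrapup}). The daisy lemma enters exactly to make $G$ work: it extracts from $\supp(\mu_i)$ a \daisy{t} with a \emph{sublinear} kernel $|K_i|\lesssim\locality'n^{1-s/\locality'}$ and density $\ge 1/\locality'$, which is small enough that (a) $G$ can afford to enumerate over \emph{all} $2^{|K_i|}$ kernel assignments (with the union-bound cost absorbed by the $2^{-\Omega(|K_i|)}$ failure probability of the binomial sampling), and (b) altering $K_i$ keeps the word within decoding radius, so for any kernel guess some fully-queried petal forces output in $\{x_i,\bot\}$, while the true kernel assignment makes \emph{all} petals output $x_i$ by perfect completeness. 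The consensus rule over kernel guesses then determines $x_i$ uniquely. Your plan works with kernels up to $\decrad n$ (constant fraction), which is far too large for any enumeration and does not mesh with the daisy's role; the paper's insight is precisely that the daisy lemma yields a kernel that is simultaneously small enough to enumerate and small enough to stay within decoding radius, while still leaving $\Omega(n^{1-(s-1)/\locality'})$ disjoint petals for the sampler to hit.
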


To the best of our knowledge, this is the first non-trivial lower bound that was shown for \emph{relaxed} LDCs. We remark that \cref{thm:main} directly extends to the setting of linear \emph{relaxed locally correctable codes}, recently introduced in \cite{GRR18}.

\paragraph{On adaptivity}
For relaxed LDCs with constant decoding radius and non-adaptive decoders (i.e., where each query is made independently of the answers to previous queries), the parameter $\alpha$ in our lower bound of $n = \Omega\left(k^{1+ \alpha}\right)$ takes the form of $\alpha = \frac{1}{O(\ell^2)}$. This dependency is quite close to that in the upper bound of Ben-Sasson et al. \cite{BGHSV04}, who showed that for any $\locality \in \N$ there exists an $\locality$-local non-adaptive relaxed LDC $C \colon \bitset^k \to \bitset^n$ with blocklength $n = k^{1 + \Theta(1/\sqrt{\locality})}$.

\paragraph{On perfect completeness.} We remark that our techniques generalise to the setting of relaxed LDCs in which the completeness condition only requires that perfectly valid codewords can be locally decoded with probability $2/3$, rather than probability $1$ (i.e., perfect completeness). For simplicity sake, we focus on decoders with perfect completeness.

\subsection{Related works}
\label{sec:related-works}
There is an extensive literature that is concerned with lower bounds on (non-relaxed) locally decodable codes in various regimes (see, e.g., \cite{KT00,Deshpande02,GKST02,Obata02,KW04,WW05,Woodruff12}), as well as for the closely related notion of locally correctable codes (see, e.g., \cite{Barak2011,BDSS11,BGT16,DSW17}), in which the goal is to correct a bit of the codeword rather than a bit of the message. We stress that none of the aforementioned bounds apply for \emph{relaxed LDC} (see discussion in \cref{sec:challange}).

Another related notion is that of \emph{locally testable codes} \cite{GS06}, which are, loosely speaking, codes for which there exists a probabilistic algorithm that accepts valid codewords, and rejects inputs that are ``far'' in Hamming distance from any codeword, while only probing a small fraction of the input. Much stronger upper bounds are known for locally testable codes than for LDCs, and in particular, there exists $O(1)$-local LTCs with blocklength $n = k \cdot \polylog(k)$ \cite{GS06} (see also \cite{Meir09,Vid13}). It is also known that LDCs do \emph{not} imply locally testable codes and vice versa \cite{KV10}.

Locally testable codes can be viewed as a special case of \emph{property testing} (see recent book \cite{Goldreich17} and references therein), which deals with algorithms that distinguish whether an input belongs to a set $S$ or is ``far'' from any input in $S$. As is the case with locally testable codes, LDCs and property testing are very distinct notions. In particular, whereas a local decoder is a \emph{local computation} algorithm that operates under the guarantee that the input is \emph{close to a codeword}, a property tester is an \emph{approximate decision} algorithm that distinguishes between exact membership in a set, and being \emph{far} from the set. Interestingly, despite these fundamental differences, we are still able to rely on techniques from \cite{FLV15} that were used in the context of property testing (see \cref{sec:generalised-sunflower}). 

\subsection{Organisation}
The rest of the paper is organised as follows. In \cref{sec:techniques} we provide a high-level overview of our techniques. In \cref{sec:preliminaries} we cover the necessary preliminaries. Finally, in \cref{sec:proof} we prove \cref{thm:main}.

\section{Techniques}
\label{sec:techniques}
In this section, we provide an overview of the proof of the lower bound in \cref{thm:main}. We begin in \cref{sec:challange} by articulating the challenge in proving a lower bound on relaxed LDC and discuss why current techniques for non-relaxed LDCs lower bounds are inherently incompatible with the setting of relaxed LDCs.

In \cref{sec:approach}, we present our high-level strategy for obtaining the lower bound, which is centred around using the relaxed local decoder to obtain a ``global decoder''; that is, a probabilistic algorithm that decodes the \emph{entire} message of a \emph{perfectly valid} codeword. In \cref{sec:first-step}, we discuss a naive attempt towards constructing such a global decoder, and articulate two main technical challenges that arise.

In \cref{sec:generalised-sunflower}, we address the first challenge by arguing that the local views of relaxed local decoders can be assumed, without loss of generality, to satisfy a structure that can be thought of as a relaxation of combinatorial sunflowers. In \cref{sec:global}, we address the remaining challenge and present the construction of our global decoder. Finally, in \cref{sec:tech-analysis}, we discuss the analysis of the global decoder and how it implies the desired lower bound.

\subsection{The challenge}
\label{sec:challange}
As we mentioned in \cref{sec:related-works}, the coding theory literature has a large body of works that prove lower bounds on \emph{non-relaxed} LDCs. It is tempting to try and apply the methodology used in these works to our setting of \emph{relaxed} LDCs. 

The caveat, however, is that essentially all LDC lower bound techniques in the literature rely on the \emph{smoothness} property of LDCs (cf. \cite[Theorem 1]{KT00}). Loosely speaking, a decoder is said to be smooth if the distribution of queries that it makes is well-spread; that is, no coordinate is being queried with high probability by the decoder. The smoothness of LDCs provides structural insight regarding local decoders, which lie at the heart of these techniques.

In stark contrast, relaxed LDCs are \emph{not} necessarily smooth. In fact, all known constructions of non-trivial relaxed LDCs (i.e., which achieve parameters that are better than known for non-relaxed LDC) are highly non-smooth, in the following sense: for each message index $i \in [k]$, a significant fraction of the queries that the relaxed local decoder makes are concentrated on a small number of coordinates.

As observed in \cite{BGHSV04}, relaxed LDCs can be made to satisfy a weaker condition, known as \emph{average smoothness}, which states that the decoder makes nearly uniform queries on \emph{average}, taken over all indices $i \in [k]$ to be decoded (however, for any particular $i \in [k]$, the queries of $D$ given decoding index $i$ may be highly concentrated). Unfortunately, the average smoothness condition is a much weaker requirement than smoothness (e.g., see discussion in \cite[Section 4.2.1]{BGHSV04}), and it is highly unclear whether it can be used to imply relaxed LDC lower bounds.

Instead, to show a lower bound on relaxed LDCs we use a new methodology that does \emph{not} rely on smoothness at all to argue about the structure of the relaxed local decoder. The approach that we take, which we discuss next in \cref{sec:approach}, strongly relies on an observation that the structure of the local views that relaxed decoders make can be essentially captured by a relaxation of the notion of sunflowers, to which we refer as \emph{daisies} and discuss in \cref{sec:generalised-sunflower}.

\subsection{High-level approach}
\label{sec:approach}
Recall that our goal is to show that every $O(1)$-local relaxed LDC $C \colon \bitset^k \to \bitset^n$ with decoding radius $\decrad = O(1)$ satisfies $n = \Omega\left(k^{1+ \alpha}\right)$, for some constant $\alpha>0$ that depends on the locality parameter and decoding radius. 

Let $C$ be such $\locality$-local relaxed LDC for $\locality = O(1)$, and let $D$ be its corresponding relaxed local decoder. For clarity of exposition, throughout the techniques section we make the simplifying assumptions that $D$ has the following properties:
\begin{enumerate}[noitemsep]
    \item \emph{non-adaptive queries:} each query is made independently of the answers to previous queries,
    \item \emph{reduced error probability:} the decoder errs with probability at most $O(1/\locality^2)$,
    \item \emph{logarithmic randomness complexity:} the decoder uses $\log(n) + O(1)$ bits to generate its queries.
\end{enumerate} 
In the actual proof, we obtain these properties by adapting standard transformations to our setting, at the cost of deterioration in part of the parameters;\footnote{The loss in parameters due to the last two transformations is minor. The adaptive to non-adaptive transformation, on the other hand, increases the (constant) query complexity by an exponential factor.} see \cref{sec:preprocessing} for details.

Our strategy for proving \cref{thm:main} is to rely on the relaxed local decoder $D$ to construct a \emph{sample-based}, \emph{global} decoder $G$ for the code $C$. By sample based, we mean that the decoder $G$ queries each coordinate independently with a certain probability $p$. By global, we mean that $G$ decodes the \emph{entire} message of a \emph{perfectly valid} codeword. We stress that for our argument, it suffices for the global decoder to only work under the promise that the input codeword is not corrupted.

Our goal is to show that the global decoder $G$ would successfully decode the entire message, when the sampling parameter is set as $p =\bp$; we discuss this choice of $p$ in \cref{sec:global_tech}. Note that in this case, with high probability $G$ only makes $O(n^{1-1/2\locality^2})$ queries to the input (and so, if it exceeds the desired query complexity, it can simply reject). However, since it is information theoretically impossible to recover a $k$-bit message via $o(k)$ queries, this would imply that $n = \Omega \left( k^{1+ \frac{1}{2\locality^2 - 1}} \right)$, which yields the desired lower bound. See \cref{sec:putting-it-all-together} for a precise argument.

Thus, we are left with the task of showing that the set of queries that $G$ makes with parameter $p = \bp$ can suffice for simultaneously emulating $k$ invocations of the decoder $D$ with respect to each decoding index $i\in[k]$. To this end, we shall first need to make a simple, yet important observation regarding relaxed local decoders, which we discuss next.

\subsection{First step towards a global decoder}
\label{sec:first-step}
Recall that we assumed the relaxed local decoder $D$ is a non-adaptive algorithm with logarithmic randomness complexity, which gets query access to a string $w \in \bitset^n$. Thus, we can represent $D$ as a collection of distributions $\set{\mu_i}_{i\in[k]}$ over subsets of $[n]$ of size $\locality$, and functions $\set{f_i \colon \bitset^{\locality} \to \tritset}_{i\in[k]}$ as follows. 

For every $i \in [k]$, the distribution $\mu_i$ corresponds to the choice of local view of the relaxed decoder $D(i)$ (i.e., $D$ on decoding index $i$). The function $f_i$ is the predicate according to which $D(i)$ decides whether to decode $0$, $1$, or reject (output $\bot$) given a local view $w|_I$, where $I$ is drawn from $\mu_i$.\footnote{In fact, the decoder may rule according to a predicate that depends on the query set, but we ignore this subtlety in the high-level overview.} Note that since we assumed that $D(i)$ has logarithmic randomness complexity (i.e., $\log(n) + O(1)$, in the blocklength $n$), we have that $\mu_i$ is supported on a linear number of local views (i.e., sets of size at most $\locality$); we use this in \cref{sec:generalised-sunflower}, where we apply a combinatorial lemma on the support of $\mu_i$.

Naively, we would have liked our global decoder $G$, which queries each location with probability $p = \bp$, to emulate an invocation of the relaxed local decoder $D(i)$ by obtaining a local view of $w$ restricted to $I \sim \mu_i$. Indeed, if the distribution $\mu_i$ is ``well spread'', the probability of obtaining a local view of $D(i)$ is high. Suppose, for instance, that all of the local views are pairwise disjoint. In this case, the probability of $G$ obtaining any particular local view is at least $p^\locality = 1/n^{1/2\locality}$, and since there are $\Theta(n)$ such local views, we can expect the global decoder to obtain $\Theta(n^{1-\frac{1}{2\locality}})$ local views.

Unfortunately, if $\mu_i$ is concentrated on a relatively small number of coordinates (as is the case with all non-trivial relaxed local decoders), it is highly unlikely that the global decoder $G$ would obtain a local view of $D(i)$. For example, if $D(i)$ queries the first coordinate of $w$ with probability $1$, then we can obtain a local view of $D(i)$ with probability at most $p$, which is negligible. 

Even worse, to decode different bits, the relaxed local decoder $D$ may concentrate its queries on different locations; e.g., it could be the case that for every $i \in [k]$, the decoder $D$ would query, say, location $i$ with probability $1$. And so, there could be a large number of coordinates that are heavily queried by $D$.

At this point, the approach may appear hopeless. However, this is exactly where the relaxed decoding condition of relaxed LDCs kicks in.  Recall that the relaxed locally decoder $D$ does not err (i.e., outputs the wrong value) with high probability, as long as the codeword is not too corrupted. Thus, even if we arbitrarily guess the values of highly queried coordinates that the global decoder $G$ failed to achieve, we could still emulate an invocation of $D$ on a slightly corrupted codeword.

For example, suppose that all of the local views of the relaxed local decoder $D$ contain, say, the first coordinate, but are otherwise disjoint. Then, by the discussion above, with high probability the global decoder $G$ would obtain many partial local views, which only lack the value of the first coordinate. We can then hope to rely on the ability of the relaxed local decoder $D$ to tolerate errors, and arbitrarily fill in the value of the missing coordinate.

Namely, we could consider both possible values of the first coordinate, and observe the following. For the right ``guess'' of the value of the first coordinate, \emph{all} local views would lead $D$ to decode correctly, whereas for the incorrect guess, the \emph{majority} of local views would lead $D$ to either the correct value or $\bot$, preventing a consensus on the wrong value. (See detailed discussion of this approach in \cref{sec:global_tech}.) 

However, there are two main challenges that arise when attempting to implement the foregoing strategy in the general case; namely:
\begin{enumerate}
    \item  The combinatorial structure of the local views of a relaxed local decoder may be complex and involve many intersections; and
    \item Unlike in (non-relaxed) LDCs, a \emph{relaxed} decoder may output a reject symbol $\bot$ with high probability (possibly with probability $1$), even if only one bit of the codeword is corrupted.
\end{enumerate}

We address the first challenge in \cref{sec:generalised-sunflower}, in which we make a crucial observation about the combinatorial structure of relaxed local decoders, and address the second challenge in \cref{sec:global}, where we describe our construction of the global decoder and its analysis.

\subsection{Relaxed sunflowers}
\label{sec:generalised-sunflower}
For the next discussion, fix $i \in [k]$, and denote by $\set{L_m}_m$ the set of all local views the relaxed local decoder $D$ might query on explicit input $i$. Recall that $D(i)$ queries $L_m$ with probability $\mu_i(L_m)$.

In \cref{sec:first-step}, we observed that if the sets $\set{L_m}_m$ intersect on a single coordinate, i.e., $\cap_{m} L_m = j$ for some $j \in [n]$, and are otherwise pairwise disjoint, then the global decoder $G$ queries many partial sets $L_m \setminus \set{j}$ with high probability, where not knowing the value at coordinate $j$ still leaves us with an input within the decoding radius.

More generally, since relaxed LDC can tolerate a large (constant) fraction of errors, the foregoing argument can be extended to combinatorial \textsf{sunflowers}; that is, a collection $\sunflower 
\coloneqq \set{S_m}_m$ of subsets of $[n]$ for which there exist a ``kernel'' $K \subseteq [n]$ such that:
(1) $\cap_{m} S_m = K$, and 
(2) the ``petals'' $\petals = \set{S_m \setminus K}$ are pairwise disjoint (see \cref{fig:daisies}(a)), where the kernel is small enough such that by changing the values we assign to it we would still remain within the decoding radius.

\begin{figure}
    \centering
    \includegraphics[scale=0.48]{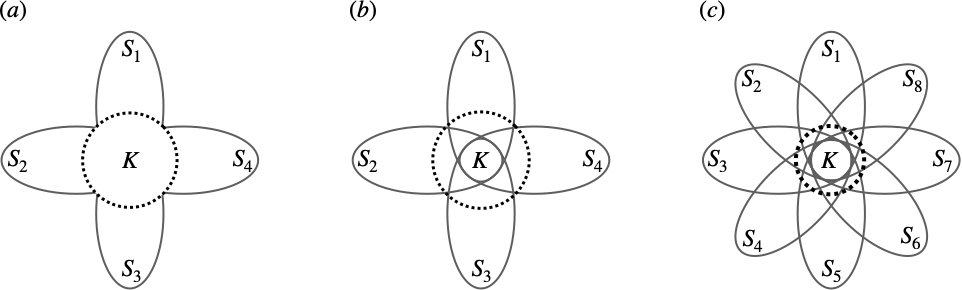}
    \caption{
    (a) \textsf{sunflower}: all sets $\set{S_m}_m$ intersect on the kernel $K$ and are otherwise pairwise disjoint;
    (b) \textsf{simple daisy (a.k.a., \simpledaisy)}: outside of the kernel $K$, all sets $\set{S_m}_m$ are pairwise disjoint; and 
    (c) $t$\textsf{-daisy}: outside of the kernel $K$, each point is covered by at most $t$ sets in $\set{S_m}_m$.
    In all figures, the dashed circle represents the kernel $K$.
    }
    \label{fig:daisies}
\end{figure}

Of course, there is no guarantee that the local views of a relaxed local decoder would be \emph{sunflowers}. While we could use \emph{sunflower lemmas} to extract sunflowers out of an arbitrary collection of subsets, we stress that the size of such sunflowers is very small (in particular, \emph{sub-linear}). Hence, we cannot simply restrict our attention to a subset of the local views that is a sunflower, as this would not preserve the soundness of the relaxed local decoder.

Nevertheless, we can use the relaxed decoding condition of $D$ even if our local views satisfy a less rigid structure than that of a sunflower. Specifically, for our argument to go through, we need the local views $\set{S_m}_m$ to only be ``mostly disjoint'' outside of an arbitrarily-structured set $K$ of small density.

Fortunately, Fischer et al. \cite{FLV15} encountered a similar combinatorial structure in the setting of \emph{property testing}, which led them to define a couple of generalisations of sunflowers, to which they refer to as ``pompoms'' and ``constellations'', and prove combinatorial lemmas regarding them. Following \cite{FLV15}, we consider a closely related relaxation of sunflowers, which we call \emph{daisies}.

Loosely speaking, a \textsf{daisy} is a sunflower in which the kernel is not necessarily the intersection of all petals, but rather a small subset such that every element outside the kernel is contained in a small number of petals.

More precisely,  a collection $\sunflower \coloneqq \set{S_m}_m$ of subsets of $[n]$ is a \daisy{t} with respect to a \emph{kernel} $K \subseteq [n]$ if for every element $j\in[n] \setminus K$, there are at most $t$ subsets $S \in \sunflower$ such that $j$ is contained in the \emph{petal} $S\setminus K$. We will refer to the special case of a \simpledaisy, wherein the petals are disjoint, as a \emph{simple daisy}. (See \cref{fig:daisies} (b) and (c).) 

Using techniques developed in \cite{FLV15}, we show a \emph{daisy lemma} (conceptually resembling a sunflower lemma) that extracts a \daisy{t} of large (constant) density, with a small kernel (i.e., such that changing its values would keep us within the decoding radius), and where $t$ is sufficiently small for obtaining petals, with high probability, using the sampling that the global decoder $G$ performs. See \cref{lem:relaxed-sunflower} for a precise statement. We remark that the daisy lemma applies to any collection of (possibly weighted) subsets, and does not rely on the fact that our collection arises from a relaxed~LDC.

In more detail, the daisy lemma extracts a \daisy{t} from the local views of each $D(i)$, which satisfies the following conditions:
\begin{itemize}[noitemsep]
    \item the density of the daisy is at least $1/\locality$;
    \item the size of the kernel $K_i$ is roughly $n^{1 - \frac{s}{\locality}}$, where $s\in[\ell]$ bounds the maximal size of the petals;
    \item the number of sets that cover each point outside of the kernel is $t = O(n^{\frac{s-1}{\locality}})$.
\end{itemize}
Next, we shall use the foregoing daisy lemma to construct the global decoder.

\subsection{The global decoder}
\label{sec:global_tech}
We are finally ready to describe our construction of the global decoder $G$ for the code $C \colon \bitset^k \to \bitset^n$, using the relaxed local decoder $D$. 

The global decoder $G$ is given query access to a string $w \in \bitset^n$, promised to satisfy $w = C(x)$, for $x \in \bitset^k$. Its goal is to fully decode $x$. To this end, $G$ starts by querying each coordinate $w_j$, for $j\in[n]$, independently with probability $p = \bp$, and tries to obtain local views of the relaxed local decoder $D(i)$ for each location $i \in [k]$, while \emph{reusing} the same samples. 

Recall that the structure of the local views of $D$ does not guarantee that any local view would be fully queried in the sampling stage of the global decoder $G$. However,  we can invoke the daisy lemma that we discussed in \cref{sec:generalised-sunflower} to extract from $\supp(\mu_i)$ a sub-collection of sets, of total density at least $1/\locality$, which is a daisy $\sunflower_i$ with kernel $K_i$, for each $i \in [k]$. Recall that the soundness error of $D$ is $1/\locality^2$, and so the density of the daisy is significantly larger than the soundness error.

Now, intuitively, since the petals of each daisy (i.e., sets $S \setminus K$ for $S \in \sunflower_i$) are ''mostly disjoint'', we can expect the global decoder $G$ to fully query a large number of petals of each daisy in $\set{\sunflower_i}_{i\in[k]}$. (See more on this at the end of \cref{sec:global_tech}.) However, it may happen that \emph{none} of the kernels $K_1, \ldots, K_k$ is queried at all (let alone fully queried) by $G$.

Thus, the main challenge is to recover the value of each $x_i$ using partial local views of $D(i)$ that do not include the value of the input $w=C(x)$ on the kernel $K_i$. For this, we shall need to rely on the properties of the relaxed local decoder, and the fact that the size of each $K_i$ is small enough such that by changing it we remain within the decoding radius of the relaxed LDC $C$.

The idea is to let the global decoder $G$ consider all possible assignments to the kernel, use each such assignment to complete the queried petals into full local views of the relaxed local decoder $D$, and rely on the properties of relaxed LDCs to identify a kernel assignment that corresponds to the decoding the correct value.

More precisely, for each $i \in[k]$, the global decoder $G$ enumerate over all possible assignments $\kappa \in \bitset^{|K_i|}$ to the kernel, and for every fully queried petal $P$ of $\sunflower_i$, considers the output of $D(i)$ on each local view $S \in \sunflower_i$ that consists of $w$ restricted to the petal $P$, with the value of $\kappa$ on the kernel $K_i$.

\label{sec:global}
\begin{figure}
    \centering
    \includegraphics[scale=0.48]{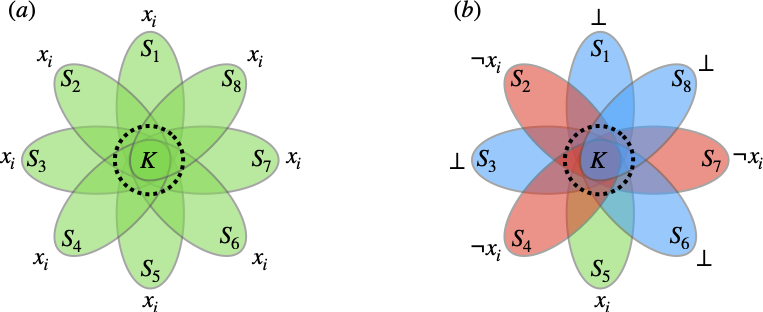}
    \caption{
    (a) corresponds to a correct ``guess'' of the kernel assignment, and (b) to a wrong one.
    Green sets correspond to local views that make the relaxed local decoder output the \emph{correct} value $x_i$.
    Red sets correspond to local views that lead to outputting the \emph{wrong} value $\neg x_i$.
    Blue sets correspond to local views that lead to reject, i.e., output $\bot$.}
    \label{fig:agreement}
\end{figure}

Now, it is crucial to make the observation that by the completeness and relaxed decoding conditions of relaxed LDCs, no kernel assignment will give rise to a majority of petals that lead to decode the wrong value, whereas there exists a least one kernel assignment that would make all petals lead to decode the correct value. More accurately, since the density of the daisy $\sunflower_i$ is at least $1/\locality$, and the soundness error is at most $1/\locality^2$, we have that:
    \begin{enumerate}
        \item Since $w$ is guaranteed to be a valid codeword $C(x)$, then for the correct assignment to the kernel $K_i$ (i.e., $\kappa \in \bitset^{|K_i|}$ such that $\kappa = w|_{K_i}$), it holds that all local views $S \in \sunflower_i$ would have made the decoder $D(i)$ output the correct value $x_i$ (see \cref{fig:agreement}(a)); and,
        \item Since changing the value of the kernel $K_i$ still leaves us within the decoding radius, then for any kernel assignment $\kappa \in \bitset^{|K_i|}$, the majority of local views $S \in \sunflower_i$ would have \emph{not} made the decoder $D(i)$ output the wrong value $\neg x_i$ (see \cref{fig:agreement}(b)).
    \end{enumerate}

The foregoing discussion naturally suggests that $G$ would decode each $x_i$ as follows: if there exists a kernel assignment that completes all fully-queried petals to local views that are in consensus on a single value $b$, and no kernel assignment leads to a consensus on $\neg b$, then output $b$. We will show that this indeed happens with $b = x_i$.

To see that, note that the first item above guarantees that there exists at least one kernel assignment for each $K_i$, which makes all petals lead to decoding the correct value $x_i$. The second item guarantees that, with high probability over petals of $\sunflower_i$ that $G$ fully queried, no kernel assignment would lead the majority of completions of fully-queried petals to local views that are consistent with the wrong value $\neg x_i$. Thus the global decoder can enumerate over all kernel assignments, and decode according to the kernel assignment that leads to a consensus.

We stress that the global decoder $G$ \emph{cannot} just guess an arbitrary value of the kernel (which would still leave us within the decoding radius). This is because it could be the case that for some kernel assignments, the majority of petals would lead to $\bot$, and petals that lead to decoding the wrong value $\neg x_i$ would be actually \emph{more common} than petals that lead to the correct value $x_i$. However, the key point is that no \emph{majority} of petals would lead to decoding the wrong value, whereas there exists a kernel assignment that would lead all petals to decode the correct value.

\subsection{Analysis of the global decoder}
\label{sec:tech-analysis}
It remains to argue that with high probability, for every $i \in[k]$ the global decoder $G$, described in \cref{sec:global_tech}, will successfully obtain fully-queried petals that would lead it to correctly decode $x_i$; that is, a set of petals such that: (1) no kernel assignment will give rise to a majority of petals that lead to decode the wrong value, and (2) there exists at least one kernel assignment that would make all petals lead to decode the correct value.

Observe that it suffices to show that for every $i \in [k]$ and kernel assignment $\kappa \in \bitset^{|K_i|}$, the global decoder $G$ only needs to obtain a single petal that leads to either outputting the correct value $x_i$ or the reject symbol $\bot$, given the kernel assignment $\kappa$; we shall refer to such petals as ``good''. To see this, note that $G$ only accepts if all petals it queried are in consensus regarding the decoding value, and so, as long there is at least one petal that corresponds to $x_i$ or $\bot$, the global decoder $G$ will not output the wrong value $\neg x_i$. On the other hand, we know that there exists a kernel assignment for which all petals lead to output the correct value $x_i$ (and neither $\bot$ nor $\neg x_i$), and so, as long as $G$ obtains one petal, it would output $x_i$.

To show that for every $i \in [k]$ and kernel assignment $\kappa \in \bitset^{|K_i|}$, the global decoder $G$ obtains a good petal as above, we first remove all petals that lead to decoding the wrong value. Recall that the density of the daisy $\sunflower_i$ we obtained from the daisy lemma is at least $1/\locality$, and the soundness error of the relaxed local decoder $D$ is at most $1/\locality^2$. Thus the total density of the good petals is at least $1/\locality - 1/\locality^2$.\footnote{In fact, it suffices have soundness error, say, $1/(10\locality)$. However, reducing the error to $1/\locality^2$ has negligible cost and it makes calculations slightly cleaner.} 

Next, we show that the density of the good petals implies that there are many of them. To this end, observe that since the density of the good local views is larger than the soundness error, then the fractional size of the set of all elements covered by the good petals must be larger than the (constant) decoding radius. This is because otherwise, replacing these elements with the values of a codeword that disagrees with $x_i$ would leave $w$ within the decoding radius, and thus break the soundness condition. Hence, the good petals cover a linear amount of coordinates, and since each petal is of constant size, we have a linear number of good petals.

Since there are many good petals in the daisy $\sunflower_i$, we can apply a lemma, which we call the \emph{simple daisy lemma} (see \cref{lem:simple-daisy}), that extracts a simple daisy (i.e., \simpledaisy{} in which the petals are pairwise disjoint) from the set of all good local views in our \daisy{t} (where recall that $t = O(n^{\frac{s-1}{\locality}})$ and $s$ bounds the maximal size of each of the petals). The resulting simple daisy has the same kernel $K_i$ of size roughly $n^{1 - \frac{s}{\locality}}$, and number of petals, each of size at most $s$, that is larger than the size of kernel $K_i$ by a multiplicative factor of at least $n^{1/\locality}$.

The petals of a simple daisy are \emph{disjoint}, and so, observe that the probability of querying all (at most $s$) elements of any petal during the sampling step is at least $p^s = 1/n^{s/2\locality^2} \geq 1/n^{1/2\locality}$. Since our simple daisy contains $d \coloneqq \Omega( n^{\frac{1}{\locality}}\cdot n^{1 - \frac{s}{\locality}})$ pairwise disjoint petals, the probability that no good petal was queried is bounded by
\begin{equation*}
    (1-p^s)^d \leq 
    \left( 1 - \frac{1}{n^\frac{1}{2\locality}}\right)^{\Omega(n^{\frac{1}{\locality}}\cdot n^{1 - \frac{s}{\locality}})} \leq
    \frac{2^{-\Omega(|K_i|)}}{10k} \enspace.
\end{equation*}

Thus, for any $i \in [k],$ taking a union bound over all kernel assignments $\kappa \in \bitset^{|K_i|}$, with probability at least $9/(10k)$, for every kernel assignment we find a least one good petal, which by the discussion above implies that the global decoder $G$ successfully decodes the correct value $x_i$. Finally, taking another union bound over all decoding indices $i\in[k]$, we obtain that the foregoing holds for all $i \in [k]$ \emph{simultaneously} with probability at least $9/10$.

As discussed in \cref{sec:approach}, since it is information theoretically impossible to recover a $k$-bit message via $o(k)$ queries, and since the global decoder $G$ decoders $k$ message bits via $O(n^{1-1/2\locality^2})$ queries to the input, we deduce that $n = \Omega \left( k^{1+ \frac{1}{2\locality^2 - 1}} \right)$, which yields the desired lower bound.

\section{Preliminaries}
\label{sec:preliminaries}
We begin with standard notation:
\begin{itemize}
\item We denote the \emph{absolute distance}, over alphabet $\Sigma$, between two strings $x \in \Sigma^n$ and $y \in \Sigma^n$ by $\adist(x,y) \coloneqq \left| \left\{ x_i \neq y_i \;\colon\; i \in [n] \right\} \right|$ and their \emph{relative distance} by $\dist(x,y) \coloneqq \frac{\adist(x,y)}{n}$. If $\dist(x,y) \leq \eps$, we say that $x$ is \textsf{$\eps$-close} to $y$, and otherwise we say that $x$ is \textsf{$\eps$-far} from $y$. Similarly, we denote the \emph{absolute distance} of $x$ from a non-empty set $S \subseteq \Sigma^n$ by $\adist(x,S) \coloneqq \min_{y \in S} \adist(x,y)$ and the \emph{relative distance} of $x$ from $S$ by $\dist(x,S) \coloneqq \min_{y \in S} \dist(x,y)$. If $\dist(x,S) \leq \eps$, we say that $x$ is \textsf{$\eps$-close} to $S$, and otherwise we say that $x$ is \textsf{$\eps$-far} from $S$. We denote the projection of $x \in \Sigma^n$ on $I \subseteq [n]$ by $x|_I$.

\item We denote by $A^x(y)$ the output of algorithm $A$ given direct access to input $y$ and oracle access to string $x$. Given two interactive machines $A$ and $B$, we denote by $(A^x,B(y))(z)$ the output of $A$ when interacting with $B$, where $A$ (respectively, $B$) is given oracle access to $x$ (respectively, direct access to $y$) and both parties have direct access to $z$. 
\item Throughout this work, probabilistic expressions that involve a randomised algorithm $A$ are taken over the inner randomness of $A$ (e.g., when we write $\Pr[A^x(y) = z]$, the probability is taken over the coin-tosses of $A$).
\end{itemize}

\subsection{Coding theory}
Let $k < n$ be positive integers and let $\Gamma,\Sigma$ be alphabets. A \textdef{code} $C \colon \Gamma^k \to \Sigma^n$ is an \emph{injective} mapping from messages of length $k$ (over the alphabet $\Gamma$) to codewords of length $n$ (over the alphabet $\Sigma$). Typically it will be the case that $\Gamma=\Sigma$, in which case we simply say that the code is over the alphabet $\Sigma$. We denote by $n$ the \textdef{blocklength} of the code (which we think of as a function of $k$) and by $k/n$ the \textdef{rate} of the code. The \textdef{relative distance} of the code is the minimum, over all distinct messages $x,y \in\Gamma^k$, of $\dist(C(x),C(y))$. We shall sometimes slightly abuse notation and use $C$ to denote the set of all of its codewords $\{C(x)\}_{x\in\Gamma^k} \subset \Sigma^n$.

\subsection{Locally decodable codes}
First, we define the notion of (non-relaxed) locally decodable codes.

\begin{definition}[Locally Decodable Codes (LDCs)]
  Let $C \subseteq \Sigma^n$ be a code with relative distance $\delta_C$. We say that $\mathcal{C}$ is a \textdef{locally decodable code} if there exists a constant \textdef{decoding radius} $\decrad<\delta_C/2$ and a polynomial time
  algorithm $D$ that gets oracle access to a string $w \in \Sigma^n$ and explicit input $i \in [k]$, such that
  
  \begin{enumerate}
  \item \textsf{(Perfect) Completeness}: For any $i\in [k]$ and $w = C(x)$, where $x\in\Sigma^k$, it holds that $D^{w}(i)= x_i$.
  
  \item \textsf{Decoding}: For any $i\in [k]$ and any $w \in\Sigma^{n}$ that is $\decrad$-close to a (unique) codeword $C(x)$,
    \begin{equation*}
      \Pr[D^{w}(i) = x_i] \ge 2/3. 
    \end{equation*}
  \end{enumerate}
The \textdef{query complexity} of $D$ is the maximal number of queries that $D$ makes for any input $i$ and $w$.
\end{definition}
\noindent Note that the constant $2/3$ can be amplified as usual by repeating the process multiple times and outputting the majority symbol.

Relaxed locally decodable codes \cite{BGHSV04} are defined as follows.

\begin{definition}[RLDC]\label{def:rldc}
  A code $C\colon\Sigma^k \to \Sigma^n$ is an $\locality$-local relaxed LDC (RLDC) if there exists a constant $\decrad \in (0,\delta_C/2)$ and a randomised algorithm $D$, known as a \emph{relaxed decoder}, that on explicit input $i\in [k]$ makes $\locality$ queries to an oracle $w$ and satisfies the following conditions.
  
  \begin{enumerate}
  \item \textsf{(Perfect) Completeness}: For any $i\in [k]$ and $w = C(x)$, where $x\in\Sigma^k$, it holds that $D^{w}(i)= x_i$.
  
  \item \textsf{Relaxed Decoding}: For any $i\in [k]$ and any $w \in\Sigma^{n}$ that is $\decrad$-close to a (unique) codeword $C(x)$,
    \begin{equation*}
      \Pr[D^{w}(i) \in \{x_i,\bot\}] \ge 2/3. 
    \end{equation*}
    
    \item \textsf{Success Rate}: There exists a constant $\rho>0$ such that for any $w\in\bitset^n$ that is $\decrad$-close to a codeword $C(x)$, there exists a set $I_w\subseteq [l]$ of size at least $\rho k$ such that for every $i\in I_w$, 
    \begin{equation*}
        \Pr[D^w(i)=x_i]\geq 2/3 \enspace.
    \end{equation*}
  \end{enumerate}
\end{definition}
The \emph{randomness complexity} of a relaxed decoder is the maximal number of random coin tosses it requires to select its local view (i.e., the set of queries it makes), where the maximum is taken over the index $i\in[k]$ with respect to which it is invoked.

\section{Proof of \cref{thm:main}}
\label{sec:proof}
Let $C \colon \bitset^k \to \bitset^n$ be an $\locality$-local relaxed LDC with decoding radius $\decrad = \Omega(1)$ and $\locality = O(1)$; denote its decoder by $D$. We prove that the blocklength of $C$ must satisfy
\begin{equation*}
    n = \Omega\left(k^{1 + \frac{1}{2^{2\locality} \cdot \log(\locality)^2 - 1}}\right) \enspace.
\end{equation*}
Since $\locality = O(1)$, this means that $C$ must have nearly-linear length; that is, there exists a constant $\alpha = \alpha(\locality,\decrad)>0$ such that $n = \Omega(k^{1+\alpha})$.

To this end, we begin in \cref{sec:preprocessing} by preprocessing the relaxed local decoder $D$ and endowing it with properties that make it amenable to our techniques. Then, in \cref{sec:daisy} we present two combinatorial lemmas that would play a key role in our analysis, allowing us to structurally argue about relaxed decoders via subsets of their local views.

Next, in \cref{sec:construction} we implement the strategy presented in \cref{sec:techniques}, by using the relaxed local decoder to obtain a construction of a global decoder $G$ that receives a \emph{valid}
codeword and decodes its entire message using query complexity that is sublinear in the blocklength $n$. We analyse the global decoder in \cref{sec:analysis}, and finally, in \cref{sec:putting-it-all-together} we derive the lower bound in \cref{thm:main} from the analysis of the global decoder.

\subsection{Preprocessing}
\label{sec:preprocessing}
Our argument begins by endowing the decoder $D$ with three properties that would facilitate the analysis of our lower bound, namely: 
(1) non-adaptive queries, 
(2) reduced error probability, and
(3) logarithmic randomness complexity.

All three steps of the preprocessing step follow from straightforward adaptation of standard techniques to the setting of relaxed LDCs. We defer their proofs to \cref{apx:deferred}. We begin by transforming the decoder $D$ into a non-adaptive algorithm via a standard transformation. We note that this transformation increases the query complexity, while preserving or improving the rest of the parameters of $D$. This would later allow us to represent the behaviour of a relaxed decoder by the distribution over its local views.

\begin{claim}[Non-adaptive queries]
    \label{clm:non-adaptive}
    If there exists an (adaptive) $\locality$-local relaxed decoder for a code $C \colon \bitset^k \to \bitset^n$, then $C$ also has a \emph{non-adaptive} $2^\locality$-local relaxed decoder with the same decoding radius.
\end{claim}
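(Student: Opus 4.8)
The plan is to prove \cref{clm:non-adaptive} via the standard ``query all potential branches'' transformation, adapted to the relaxed setting. Given the adaptive $\locality$-local relaxed decoder $D$, on explicit input $i \in [k]$ its behaviour on an oracle $w$ is described by a depth-$\locality$ decision tree $T_i$ (whose internal nodes are labelled by coordinates of $[n]$ to query, with branches according to the bit read, and whose leaves are labelled by outputs in $\tritset$). The key structural observation is that the set of coordinates appearing along any single root-to-leaf path of $T_i$ has size at most $\locality$, and the union of all coordinates reachable in $T_i$ has size at most $2^\locality - 1$. I would therefore define the non-adaptive decoder $D'$ as follows: on explicit input $i$, sample the same internal randomness used by $D$ to select the tree $T_i$ (if $D$ already selects $T_i$ deterministically given $i$, there is no randomness here; in general $D$ has a distribution over such trees), query \emph{all} coordinates labelling nodes of $T_i$ — at most $2^\locality$ of them — and then simulate the adaptive walk down $T_i$ using the bits just read, outputting whatever $D$ would have output at the reached leaf.

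The main steps are then: (i) observe $D'$ is non-adaptive with query complexity at most $2^\locality$, since its query set depends only on $i$ and its own coins, not on the answers; (ii) verify perfect completeness — when $w = C(x)$, the simulated walk down $T_i$ reaches exactly the leaf that $D^{C(x)}(i)$ reaches, so $D'^{C(x)}(i) = D^{C(x)}(i) = x_i$ always; (iii) verify relaxed decoding with the same decoding radius $\decrad$ — for any $w$ that is $\decrad$-close to a unique codeword $C(x)$, the output of $D'^{w}(i)$ is, pointwise over $D$'s coins, \emph{identical} to the output of $D^{w}(i)$, because reading extra coordinates does not change which leaf the adaptive walk reaches; hence $\Pr[D'^{w}(i) \in \{x_i,\bot\}] = \Pr[D^{w}(i) \in \{x_i,\bot\}] \ge 2/3$; (iv) the success-rate condition transfers verbatim for the same reason, with the same set $I_w$ and constant $\rho$. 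Thus all three conditions of \cref{def:rldc} are inherited.

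I do not expect a serious obstacle here — this is a routine simulation argument, and the only mild subtlety is being careful about what ``the same randomness'' means when $D$ is adaptive and randomised: one should fix a coupling in which $D'$ uses its coins to reproduce $D$'s distribution over decision trees $T_i$, and then the simulation of the path inside a \emph{fixed} tree is deterministic given the oracle answers. A second point worth stating explicitly is why the bound $2^\locality$ rather than $2^\locality - 1$: a depth-$\locality$ binary decision tree has at most $2^\locality - 1$ internal nodes, but it is cleanest (and loses nothing) to simply state the query complexity as at most $2^\locality$. One should also note that the transformation preserves or improves the decoding radius and does not increase the soundness error, which is exactly what the subsequent preprocessing steps (error reduction, randomness reduction) will build on; since those are handled in separate claims, the present proof only needs to establish the non-adaptivity and query bound while carrying the other parameters through unchanged.
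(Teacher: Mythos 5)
Your proposal is correct and follows essentially the same route as the paper: represent the adaptive decoder as a distribution over depth-$\locality$ decision trees, have the non-adaptive decoder query all (at most $2^\locality$) coordinates labelling the chosen tree, and simulate the walk from the answers. The paper's write-up is terser (it does not spell out the three conditions of Definition \ref{def:rldc} individually), but the argument is the same.
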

See \cref{apx:non-adaptive} for the proof of \cref{clm:non-adaptive}. Denote by $D_1$ the non-adaptive relaxed decoder obtained by applying \cref{clm:non-adaptive} to the (adaptive) relaxed decoder $D$.

Next, we amplify the soundness of the decoder $D_1$, as later we shall need to invoke it multiple times and tolerate a union bound over all invocations. To this end, we use the following simple claim.

\begin{claim}[Amplification]
    \label{clm:amp}
    If there exists a non-adaptive $\locality$-local relaxed decoder for the code $C$, which errs with probability at most $1/3$, then $C$ also has an amplified non-adaptive $O(\locality \cdot \log(1/\eps))$-local relaxed decoder that errs with probability at most $\epsilon$; furthermore, the amplified relaxed decoder preserves the perfect completeness condition.
\end{claim}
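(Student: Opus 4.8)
The plan is to use the standard ``repeat and take a strict majority'' amplification, adapted to the three-valued output alphabet $\tritset$. Given the non-adaptive $\locality$-local relaxed decoder $D_1$ that errs with probability at most $1/3$, I define the amplified decoder $D_2$ as follows: on explicit input $i \in [k]$, it runs $m \coloneqq c \cdot \log(1/\eps)$ independent copies of $D_1(i)$, each with fresh randomness, obtaining outputs $b_1,\dots,b_m \in \tritset$; it then outputs the unique value $b \in \bitset$ that occurs strictly more than $m/2$ times among $b_1,\dots,b_m$ if such a value exists, and outputs $\bot$ otherwise. Here $c$ is a universal constant fixed by the Chernoff bound below. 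Since each copy of $D_1$ is non-adaptive and makes $\locality$ queries, the union of their query sets shows that $D_2$ is non-adaptive with query complexity $m\cdot\locality = O(\locality \cdot \log(1/\eps))$, and $D_2$ inherits the decoding radius $\decrad$ of $D_1$ unchanged. Perfect completeness is immediate: if $w = C(x)$ then, by the perfect completeness of $D_1$, every copy outputs $b_j = x_i$, so $x_i$ occurs $m > m/2$ times and $D_2^w(i) = x_i$.

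For the relaxed-decoding (error) condition, fix $i \in [k]$ and $w$ that is $\decrad$-close to a codeword $C(x)$. Each copy of $D_1$ independently satisfies $\Pr[b_j = \neg x_i] \le 1/3$, so the expected number of copies outputting $\neg x_i$ is at most $m/3$. The decoder $D_2$ outputs the wrong value $\neg x_i$ only if $\neg x_i$ occurs strictly more than $m/2$ times; by a Chernoff bound this event has probability at most $e^{-\Omega(m)}$, which is at most $\eps$ once $c$ is chosen large enough (the hidden constant depends only on the gap between $1/3$ and $1/2$). Hence $\Pr[D_2^w(i) \in \{x_i,\bot\}] \ge 1-\eps$. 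The same argument applied to the event ``$x_i$ occurs at most $m/2$ times'', now using $\Pr[b_j = x_i] \ge 2/3$ which holds for every $i$ in the success-rate set $I_w$ of $D_1$, shows that $D_2$ outputs $x_i$ with probability at least $1-\eps \ge 2/3$ on such $i$; thus $D_2$ is a relaxed decoder with the same success-rate set $I_w$.

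There is no genuine obstacle in this claim; the only point requiring care is the choice of combination rule. A rule such as ``output $\bot$ unless all copies agree'' would preserve completeness but would not amplify at all, whereas taking an unweighted plurality over $\tritset$ could in principle be steered toward a wrong value when the honest behaviour of $D_1$ is to output $\bot$ on most runs. Taking a strict majority over $\bitset$ avoids this precisely because the probability of any particular wrong value is bounded below $1/2$, so a wrong value can form a majority only with exponentially small probability, while a correct value automatically forms the full majority on valid codewords.
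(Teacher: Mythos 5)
Your strict-majority construction is correct and your analysis of it is sound: the only way for $D_2$ to output the wrong value $\neg x_i$ is for strictly more than $m/2$ of the $m = c\log(1/\eps)$ independent copies to output $\neg x_i$, each copy does so with probability at most $1/3$, and a Chernoff bound gives $e^{-\Omega(m)} \le \eps$ once $c$ is a suitable constant; non-adaptivity and perfect completeness are immediate. As you observe, this rule even preserves the success-rate condition, which the paper's transformation does not.

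However, the parenthetical remark that the rule ``output $\bot$ unless all copies agree'' ``would not amplify at all'' is incorrect, and it is exactly the rule the paper uses. What must be amplified to satisfy the claim is the probability of outputting the \emph{wrong} value $\neg x_i$ (i.e., of landing outside $\{x_i,\bot\}$), not the probability of outputting $x_i$. Under the unanimity rule, $D'$ errs only when all $\log(1/\eps)$ independent copies output $\neg x_i$, which happens with probability at most $(1/3)^{\log(1/\eps)} = \eps^{\log 3} < \eps$. So unanimity does amplify the error, with a strictly simpler analysis (independence alone, no Chernoff) and fewer repetitions, still giving $O(\locality\log(1/\eps))$ queries. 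The real price of unanimity is that it destroys the success-rate condition, since a single $\bot$ among the copies forces an overall $\bot$; the paper explicitly flags this and notes that the downstream argument never invokes success rate. In short: your majority rule is a valid alternative, and it buys success-rate preservation at the cost of a constant-factor blow-up and a Chernoff argument; but you should retract the claim that unanimity fails to amplify, since that is both wrong and happens to be what the paper does.
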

See \cref{apx:amp} for the proof of \cref{clm:amp}. The transformation in \cref{clm:amp} hampers the success rate condition of a relaxed LDC, however, we stress that our argument does \emph{not} rely on the aforementioned condition. Denote by $D_2$ the amplified relaxed decoder obtained by applying \cref{clm:amp} to the relaxed decoder $D_1$ with respect to soundness error $\eps = 1/\locality^2$. 

Finally, we generalise a lemma due to Goldreich and Sheffet \cite{GS10}, which reduces the randomness complexity of query algorithms, to the setting of relaxed LDCs. This would later allow us to invoke a relaxed sunflower lemma (see \cref{sec:daisy}).

\begin{claim}[Randomness reduction]
    \label{clm:rand_reduction}
    If there exists a non-adaptive, $\locality$-local relaxed decoder, where $\locality = O(1)$, for a \emph{binary} code $C$ with constant error probability $\eps$, then $C$ also has an $O(\locality)$-local relaxed local decoder with the same parameters, except randomness complexity $\log(n) + O(1)$.
\end{claim}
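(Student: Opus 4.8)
The plan is to adapt the randomness-reduction technique of Goldreich and Sheffet \cite{GS10} to the relaxed setting. The key observation is that a non-adaptive $\locality$-local relaxed decoder $D_2$ is, for each index $i \in [k]$, fully described by a distribution $\mu_i$ over query sets $I \subseteq [n]$ of size at most $\locality$ together with a family of predicates $f_{i,I} \colon \bitset^{|I|} \to \tritset$. The idea is to replace each $\mu_i$ by a distribution $\tilde{\mu}_i$ supported on only $\poly(n)$ (in fact $O(n)$) local views, obtained by sampling: draw $m = \Theta(n)$ independent local views from $\mu_i$ and let $\tilde{\mu}_i$ be the uniform distribution over this multiset. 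Since each sampled set can be indexed by $\log m = \log n + O(1)$ bits, the resulting decoder $\tilde{D}$ has the stated randomness complexity. The predicates $f_{i,I}$ are inherited unchanged, so perfect completeness is automatically preserved: on a valid codeword $w = C(x)$, \emph{every} local view in $\supp(\mu_i)$ makes $D_2$ output $x_i$, hence so does every local view in $\supp(\tilde{\mu}_i)$, and $\tilde{D}^{C(x)}(i) = x_i$ with probability $1$.

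The remaining step is to argue that for a \emph{fixed} corrupted word $w$ and fixed $i$, resampling does not increase the error probability by more than a negligible amount. First I would fix a target word $w$ that is $\decrad$-close to some $C(x)$, fix $i$, and define the ``bad event'' as $f_{i,I}(w|_I) = \neg x_i$. Under $\mu_i$ this event has probability at most $\eps$; under $\tilde{\mu}_i$ it has probability equal to the empirical frequency of bad sets among the $m$ samples. By a Chernoff/Hoeffding bound, for $m = \Theta(n / \eps^2)$ this empirical frequency exceeds $\eps + \eps'$ (say $\eps' = \eps$, doubling the error) with probability at most $2^{-\Omega(n)}$. The subtlety is that there are infinitely many candidate words $w$; however, the error probability of the relaxed decoder on $w$ depends only on the restriction of $w$ to $\bigcup_{I \in \supp(\mu_i)} I$, and more to the point, it suffices to take a union bound over the at most $2^n$ Boolean words $w \in \bitset^n$ (or even just over the $n^{O(1)} \cdot 2^{\decrad n}$ words within the decoding ball, after fixing a codeword), together with the $k \le 2^n$ choices of index $i$. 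Choosing the hidden constant in $m = \Theta(n)$ large enough, the failure probability $2^{-\Omega(n)} \cdot 2^{O(n)}$ is bounded below $1$, so by the probabilistic method a good choice of samples exists, yielding a fixed relaxed decoder $\tilde D$ with error at most $2\eps = O(1)$ (which one may fold back into the claimed constant, or re-amplify via \cref{clm:amp} at the cost of a constant factor in query complexity). The query complexity is unchanged at $\locality$, and the decoding radius is unchanged since we never touch $\decrad$.

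The main obstacle is handling the union bound over corrupted words correctly: naively there is a continuum of strings within the decoding ball, so one must observe that only the Boolean restrictions matter and that $2^n$ words suffice, which forces $m = \Omega(n)$ and hence exactly the $\log n + O(1)$ randomness bound — any sub-linear $m$ would not survive the union bound. A secondary point to be careful about is that the relaxed decoder's predicate is allowed to depend on the query set $I$ (as footnoted earlier in the excerpt), not just on the answers $w|_I$; this causes no difficulty, since we carry the pair $(I, f_{i,I})$ through the sampling verbatim. Everything else — preservation of perfect completeness, of the decoding radius, and the $O(\locality)$ query bound — is immediate from the construction. This establishes \cref{clm:rand_reduction}; we write $D_3$ for the relaxed decoder it produces when applied to $D_2$, and this is the decoder to which the combinatorial machinery of \cref{sec:daisy} will be applied.
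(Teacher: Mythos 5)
Your proposal is correct and takes essentially the same route as the paper: you reduce randomness by sampling $\Theta(n)$ coin strings (equivalently, local views from each $\mu_i$), applying a Chernoff bound for each fixed pair $(i,w)$ and a union bound over the at most $2^n$ binary inputs, then restoring the error constant by $O(1)$-fold repetition—which is exactly where the $O(\locality)$ in the statement comes from. The paper frames this as a $2^r \times 2^n$ matrix whose columns are words within the decoding ball and samples rows, which is the same probabilistic-method argument in different notation; your extra remarks about carrying $(I,f_{i,I})$ through the sampling and about why the binary alphabet makes the union bound finite are accurate and make explicit points the paper leaves implicit.
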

See \cref{apx:rand_reduction} for the proof of \cref{clm:rand_reduction}. Denote by $D'$ the relaxed decoder obtained by applying the randomness reduction in \cref{clm:rand_reduction} to the relaxed decoder $D_2$. 

We conclude this subsection by observing that $D'$ is \emph{non-adaptive} $\locality'$-local relaxed decoder with soundness error $\eps'$ and randomness complexity $r'$, where $\locality' = O( 2^{\locality} \cdot \log(\locality) ) = O(1)$, $\eps' = 1/\locality'^2$, and $r' = \log(n) + O(1)$.

\subsection{Relaxed sunflower lemmas}
\label{sec:daisy}
As discussed in the technical overview, we shall view the non-adaptive $\locality'$-local relaxed decoder $D'$ obtained in \cref{sec:preprocessing} as a set of distributions $\set{\mu_i}_{i\in[k]}$, where each $\mu_i$ is the distribution over subsets of $[n]$ of size $\locality'$, which correspond to the local views of the relaxed decoder $D'$ on decoding index $i\in[k]$.

To argue about the probability of obtaining local views of a (non-adaptive) relaxed local decoder, we shall consider a structured subset of the local views, which satisfies a relaxed form of a combinatorial sunflower, to which we refer to as a \emph{daisy}. Loosely speaking, a \daisy{t} is a sunflower in which the kernel is not the intersection of all petals, but rather a subset such that every element outside the kernel is contained in at most $t$ petals. A formal definition follows.

\begin{definition}[daisy]
  \label{def:relaxed-sunflower}
  Suppose $U$ is a universe set and $\sunflower$ is a collection of subsets of $U$. The collection $\sunflower$ is an \emph{\daisy{t}} if there is a subset $K \subseteq U$, called \emph{kernel}, such that for every $u\in U\setminus K$, there are at most $t$ subsets $S \in \sunflower$ such that $u$ is contained in the \emph{petal} $S\setminus K$.
\end{definition}
\noindent We will refer to the special case of a \simpledaisy, wherein the petals are disjoint, as a \emph{simple daisy}.

We remark that the notion of a daisy is a generalisation of a sunflower, which provides a unified view of the notions of ``pompoms'' and ``constellations'', defined in \cite{FLV15}, without insisting on petals of equal size. 

 In the following it would be convenient to define the \emph{degree} of an element $u \in U$ in a collection $\sunflower \subseteq 2^U$ by $\deg{\sunflower}{u} = |\set{S \in \sunflower : u \in S}|$. Using this notation, a \daisy{t} $\sunflower$ satisfies that every point $u$ outside its kernel has $\deg{\sunflower}{u} \leq t$.

The main parameters of a \daisy{t} $\sunflower \subseteq 2^U$ are:
(1) the \emph{kernel size} $|K|$, 
(2) the \emph{number of sets} $|\sunflower|$, and
(3) the \emph{degree bound} $t$.
As with sunflowers, we are typically interested in finding a large daisy in a collection of subsets. More generally, since in our setting the collection of subsets would correspond to local views of a relaxed decoder, which are chosen according to some distribution, we wish to find a ``heavy'' daisy in a collection of \emph{weighted} subsets.

We will need two lemmas:
(1) a lemma that takes a weighted collection of subsets and extracts a \daisy{t}, for $t$ that is sublinear in $n$, which consists of heavy sets; and
(2) a lemma that takes a \daisy{t} and extracts a simple daisy (i.e., \simpledaisy), such that the union of its petals covers a large fraction of the domain.

Note that the former lemma yields a guarantee regarding the \emph{weight} of the sets in a daisy, which corresponds to structure representing a relaxed local decoder according to its distribution of local views, whereas the latter lemma is a purely combinatorial lemma that ``flattens out'' the weights and gives a guarantee regarding the number of coordinates that are covered by a simple daisy that is derived from a \daisy{t}.  

In the rest of this subsection, we will follow the approach in \cite{FLV15} to derive the aforementioned lemmas, which would be instrumental to our approach.

\paragraph{Extracting a heavy daisy from a weighted collection of subsets.}
The following lemma shows that a sufficiently large collection of subsets, weighted according to a distribution, contains a daisy with a small kernel and heavy petals.  

\begin{lemma}[daisy lemma]
    \label{lem:relaxed-sunflower}
    Let $\mathcal{T}$ be a collection of $c n$ subsets of $[n]$ of size $\locality$ each.
    Let $\mu$ be a distribution over $2^{[n]}$, whose support is $\mathcal{T}$.
    Then, for some $s \in [\locality]$, and $m= \max\{1,s-1\}$,
    there exists a $c n^{m/\locality}$-daisy $\sunflower \subseteq \mathcal{T}$ with a kernel of size at most $\locality \cdot n^{1-s/\locality}$ and petals of size at most $s$, such that $\mu(\sunflower) \geq 1/\locality$.
\end{lemma}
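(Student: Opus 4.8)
The plan is to prove a slightly stronger combinatorial statement: that $\mathcal{T}$ can be \emph{partitioned} into $\locality$ sub‑collections $\sunflower_1,\dots,\sunflower_{\locality}$ such that, for each $s\in[\locality]$, the collection $\sunflower_s$ is a $c\,n^{m/\locality}$‑daisy (with $m=\max\{1,s-1\}$) whose kernel has size at most $\locality\cdot n^{1-s/\locality}$ and whose petals have size at most $s$ (in the sense of \cref{def:relaxed-sunflower}). Since the $\sunflower_s$ partition $\mathcal{T}$ and there are only $\locality$ of them, some index $s$ must satisfy $\mu(\sunflower_s)\ge 1/\locality$, and that $\sunflower_s$ is exactly the daisy promised by the lemma. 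So it suffices to construct the partition.

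I would construct it by an iterative ``peeling'' procedure that processes the levels $s=\locality,\locality-1,\dots,2$ in decreasing order, using a geometrically growing sequence of degree thresholds $\theta_1,\dots,\theta_{\locality-1}$ with $\theta_j = c\,n^{j/\locality}$. Initialise $\mathcal{T}^{(\locality)}:=\mathcal{T}$ and $K^{(\locality)}:=\emptyset$, and maintain the invariant that every $S\in\mathcal{T}^{(s)}$ satisfies $|S\setminus K^{(s)}|\le s$ (true initially since $|S|=\locality$). At stage $s$, let $H^{(s)}\subseteq[n]\setminus K^{(s)}$ be the set of elements $u$ whose \emph{petal‑degree} $\bigl|\{S\in\mathcal{T}^{(s)}:u\in S\setminus K^{(s)}\}\bigr|$ exceeds $\theta_{s-1}$. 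Put into $\sunflower_s$, with kernel $K^{(s)}$, exactly those $S\in\mathcal{T}^{(s)}$ whose petal $S\setminus K^{(s)}$ is disjoint from $H^{(s)}$, and pass the remaining sets down as $\mathcal{T}^{(s-1)}$ with the enlarged kernel $K^{(s-1)}:=K^{(s)}\cup H^{(s)}$. Each set passed down loses at least one element of its petal (one lying in $H^{(s)}$), so the invariant is preserved, and at the end we set $\sunflower_1:=\mathcal{T}^{(1)}$ with kernel $K^{(1)}=\bigcup_{j=2}^{\locality}H^{(j)}$.

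The verification splits into three parts. \emph{(i) Petal size:} the invariant immediately gives petals of size at most $s$ in $\sunflower_s$. \emph{(ii) Degree bound:} for $s\ge 2$ and $u\notin K^{(s)}$, either $u\in H^{(s)}$, in which case $u$ lies in no petal of $\sunflower_s$, or $u$ has petal‑degree at most $\theta_{s-1}=c\,n^{(s-1)/\locality}=c\,n^{m/\locality}$; and since enlarging a kernel or restricting to a sub‑collection only decreases petal‑degrees, the same bound (with threshold $\theta_1$, so $m=1$) holds for $\sunflower_1$. \emph{(iii) Kernel size:} double counting the incidences between petal elements and sets of $\mathcal{T}^{(j)}$ gives $\sum_{u\notin K^{(j)}}(\text{petal‑degree of }u\text{ in }\mathcal{T}^{(j)})=\sum_{S\in\mathcal{T}^{(j)}}|S\setminus K^{(j)}|\le \locality\,|\mathcal{T}|=\locality c n$, hence $|H^{(j)}|\le \locality c n/\theta_{j-1}=\locality\,n^{1-(j-1)/\locality}$; summing this geometric series over $j=s+1,\dots,\locality$ bounds $|K^{(s)}|$ by $\locality\,n^{1-s/\locality}$ once constants are fixed appropriately (for $n$ sufficiently large the geometric factor is $1+o(1)$, and the remaining bounded range of $n$ is vacuous for the intended application).

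The only genuinely delicate point — and the main obstacle — is the bookkeeping of the parameters across stages: one must verify that growing the kernel and shrinking the collection never \emph{increases} any petal‑degree, so that a threshold chosen at stage $s$ remains valid for every later, smaller collection, and one must tune the constants in $\theta_j = c\,n^{j/\locality}$ so that the geometric sum of the $|H^{(j)}|$ fits inside $\locality\cdot n^{1-s/\locality}$ while the per‑element petal‑degree stays at $c\,n^{m/\locality}$. This constant‑chasing is precisely the kind of argument the machinery of Fischer et al.~\cite{FLV15} is designed to handle, and it goes through with room to spare.
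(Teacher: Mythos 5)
Your proposal is correct and reaches the same conclusion, but via a genuinely different construction. You iterate downward from $s=\locality$ to $s=1$ and \emph{accumulate} the kernel $K^{(s)}=\bigcup_{j>s}H^{(j)}$ across stages; the paper iterates upward from $i=1$ to $\locality$ and \emph{recomputes} a fresh kernel $K_i$ at each step as the set of elements whose degree in the current residual collection $\mathcal{T}_i$ exceeds $cn^{i/\locality}$. The two constructions trade off in the two technical steps. In your version the degree bound on $\sunflower_s$ is immediate: any point outside $K^{(s)}$ that still has petal-degree above the threshold lies in $H^{(s)}$ and hence meets no petal of $\sunflower_s$ at all. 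In the paper's version the degree bound is \emph{not} immediate and requires a separate contradiction argument (\cref{clm:External}): one must show that any set through a high-degree point outside $K_i$ would already have been classified into $\sunflower_{i-1}$. Conversely, because your kernel $K^{(s)}$ is a union of $\locality-s$ geometrically decreasing pieces $H^{(j)}$, your kernel-size bound picks up a geometric factor $1+n^{-1/\locality}+\dots = 1+o(1)$, so what you literally prove is $|K^{(s)}|\le\locality\,n^{1-s/\locality}(1+o(1))$ rather than the lemma's stated $\locality\,n^{1-s/\locality}$ (already visible at $\locality=3$, where $|K^{(1)}|\le\locality n^{1-1/\locality}+\locality n^{1-2/\locality}$). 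The paper's fresh-kernel version gives the clean bound directly by a single double-count. You flag this slack and correctly note it only perturbs constants, which is harmless for \cref{thm:main}; alternatively one could tune the thresholds $\theta_j$ by a constant to absorb the tail, at the cost of a matching constant in the degree bound. Both routes are valid; yours is somewhat cleaner to verify locally, the paper's gives the exact constants as stated.
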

We prove \cref{lem:relaxed-sunflower} following the approach in \cite{FLV15}, and we defer this proof to \cref{apx:gen_sunflower}. 

\paragraph{Extracting a simple daisy from \daisy{t}.}
The second lemma that we shall need shows that every \daisy{t} that covers a large part of the universe set contains a simple daisy (i.e., \simpledaisy) of significant size.

\begin{lemma}[simple daisy lemma]
    \label{lem:simple-daisy}
      Let $\sunflower$ be a $t$-\emph{daisy} with kernel $K$ and petals of size at most $s$, such that $|\bigcup_{S\in \sunflower}S| = c' n$.
      Then, there exists a \emph{simple} daisy $\sunflower_0 \subseteq \sunflower$ whose kernel is $K$, such that $|\sunflower_0| \geq c' n - |K|$ if $s=1$, and otherwise $|\sunflower_0| \geq \frac{c' n - |K|}{ts^2}$.
\end{lemma}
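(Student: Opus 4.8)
The plan is to extract the simple daisy greedily, peeling off petals one at a time while deleting everything that conflicts with the chosen petal. Concretely, I would maintain a working subcollection $\sunflower'$ (initially all of $\sunflower$), and repeatedly: pick any $S \in \sunflower'$, add it to the output collection $\sunflower_0$, and then remove from $\sunflower'$ every set $S' \in \sunflower'$ whose petal $S' \setminus K$ intersects the petal $S \setminus K$. Since the output petals are pairwise disjoint by construction and all sit inside the common kernel $K$, the result is indeed a simple daisy with kernel $K$. The whole content of the lemma is therefore the lower bound on $|\sunflower_0|$, i.e. bounding how many sets get deleted at each step.

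The key accounting step is as follows. When we select a petal $P = S \setminus K$ of size at most $s$, the sets removed are exactly those $S' \in \sunflower'$ with $(S' \setminus K) \cap P \neq \emptyset$, i.e. those whose petal contains at least one of the $\le s$ elements of $P$. Since $\sunflower$ is a \daisy{t}, each fixed element $u \in [n] \setminus K$ lies in at most $t$ petals, so at most $t$ sets of $\sunflower'$ have a petal through $u$; summing over the $\le s$ elements of $P$ bounds the number of deleted sets (including $S$ itself) by $ts$. Hence each iteration consumes at most $ts$ sets of $\sunflower$ while producing one set of $\sunflower_0$, so the process runs for at least $|\sunflower|/(ts)$ steps and $|\sunflower_0| \ge |\sunflower|/(ts)$. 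Now I would convert the hypothesis $|\bigcup_{S \in \sunflower} S| = c'n$ into a lower bound on $|\sunflower|$: the union splits as $K \,\cup\, \bigcup_{S} (S \setminus K)$, so $\bigcup_S (S\setminus K)$ has size at least $c'n - |K|$; and since each petal has size at most $s$, covering $c'n - |K|$ elements requires at least $(c'n - |K|)/s$ petals, hence $|\sunflower| \ge (c'n-|K|)/s$. Combining, $|\sunflower_0| \ge (c'n-|K|)/(ts^2)$, which is the claimed bound for $s \ge 2$.

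For the $s = 1$ case the petals are singletons, so two sets conflict only when they have the same petal; picking distinct singleton petals directly, the number of achievable disjoint petals is just the number of distinct petal-elements, which is $|\bigcup_S (S\setminus K)| \ge c'n - |K|$, giving $|\sunflower_0| \ge c'n - |K|$ as stated (here one does not even lose the factor $t$, since a \daisy{1} restricted to its petals is already a simple daisy after discarding duplicate singletons; alternatively one just notes $t \ge 1$ and $s=1$ so $ts^2 = t$, but the cleaner bound $c'n-|K|$ follows from the direct argument). I expect the only real subtlety to be bookkeeping: making sure the "at most $ts$ deleted per step" bound correctly double-counts (an element of $P$ might be hit by fewer than $t$ surviving sets, and $S$ itself is counted), and making sure the union/petal-count conversion handles the possibility that petals overlap in $\sunflower$ (they do — that is why we only get $|\sunflower| \ge (c'n-|K|)/s$ rather than equality). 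None of these steps is deep; the proof is a short greedy argument with a degree-based deletion bound.
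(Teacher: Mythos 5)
Your proof is correct and follows essentially the same greedy strategy as the paper: repeatedly select a set, add it to $\sunflower_0$, and delete all sets whose petals conflict, using the degree bound $t$ to control deletions. The only difference is bookkeeping: the paper tracks the loss in the covered set $M = \bigcup_{T} T$ directly (losing at most $ts^2$ elements per step), whereas you bound the number of sets deleted per step by $ts$ and then separately convert $|\bigcup_S(S\setminus K)| \geq c'n - |K|$ into $|\sunflower| \geq (c'n - |K|)/s$; both routes give $|\sunflower_0| \geq (c'n - |K|)/(ts^2)$, and your version is arguably a bit cleaner.
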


\begin{proof}
     Initiate $\sunflower_0$ to be the empty set. Our strategy would be to iteratively add \emph{pairwise disjoint} sets to $\sunflower_0$ until it satisfies the requirements of the lemma. All the sets we shall add to $\sunflower_0$ are elements in the \daisy{t} $\sunflower$, and therefore $\sunflower_0\subseteq\sunflower$, which in turn implies that, for every $T\in \sunflower_0$, it holds that $|T\setminus K| \leq t$. In the following, denote the subset of the domain that $\sunflower$ covers by $M = \bigcup_{T\in \sunflower}T$.

    We first deal with the simple case where $s=1$; that is, all petals $\set{T \setminus K : T \in \sunflower}$ of the daisy $\sunflower$ contain just a single element. The idea is that here, for any point outside the kernel we can choose a single set that contains the point, obtaining a simple daisy.
    
    In more detail, by definition of $M$, for every element $j\in M\setminus K$, which lies outside the kernel, there exists at least one petal of $\sunflower$ that covers it, i.e., there exists $T\in \sunflower$ such that $T\setminus K = \set{j}$; choose such a set and add it to $\sunflower_0$. By our construction of $\sunflower_0$, for every distinct sets $T_1$ and $T_2$ in the daisy $\sunflower_0$, it holds that the petals $T_1\setminus K$ and $T_2\setminus K$ are disjoint. Now, Since for every point $j\in M\setminus K$ outside the kernel, there exists at least one $T\in \sunflower_0$, such that $T\setminus K = \set{j}$, we have that  $|\sunflower_0|\geq c_2n - |K|$, as required.
    
    We now proceed to the other case, where $1 < s \leq \locality$. The idea here is that, as before, for any point outside the kernel we choose a single set that contains the point, only now, we remove all sets that intersect outside the kernel with the set that we chose, and proceed this way iteratively. Details follow.

    For every $j\in M\setminus K$, there exists at least one $T\in \sunflower$ such that $j\in T\setminus K$; choose such a set and add it to $\sunflower_0$. In addition, remove $T$ from $\sunflower$ and also every set $T^*$ in $\sunflower$ such that $T\setminus K$ and $T^*\setminus K$ are not disjoint. We then reset $M$ to be the union of the sets in $\sunflower$ after the sets were removed from it, and repeat the foregoing process until $M\setminus K$ is empty. 
    
    In order to lower bound the size of $\sunflower_0$, we shall upper bound the loss in the cardinality of $M$ that happens in each iteration of removing sets from $\sunflower$. Note that the the size of the union of all sets that are removed from $M$ in a single iteration is at most their number times their size, which is bounded by $t s^2$. This is the maximum loss in the cardinality of $M$, since we are over-counting by assuming that every element of a set $T$ that is added to $\sunflower_0$ is contained in another set removed, and hence we can ignore the set $T$ during the computation. Consequently, we have that $|\sunflower_0|\geq\frac{c' n - |K|}{ts^2}$.
\end{proof}

\subsection{Construction of a global decoder}
\label{sec:construction}
Recall that in \cref{sec:preprocessing} we obtained a \emph{non-adaptive} $\locality'$-local relaxed decoder $D'$ for the code $C$ with soundness error $\eps'$ and randomness complexity $r'$, where
\begin{equation*}
    \locality' = O( 2^\locality \cdot \log(\locality)) \enspace,\enspace \eps' = 1/\locality'^2 \enspace,\enspace r' = \log(n) + O(1) \enspace.
\end{equation*}

Since $D'$ is an $\locality'$-local \emph{non-adaptive} decoder, the queries that it makes can be described by a distribution over $\locality'$-tuples of coordinates. More precisely, for every message location $i \in [k]$, there exist a predicate $f_i \colon \bitset^{\locality'} \to \tritset$ and distribution $\mu_i$ over size $\locality'$ subsets of $[k]$, such that $D'(i) = f_i(w|_I)$ for $I \sim \mu_i$.\footnote{More accurately, for any $i \in [k]$ and query set $I$ chosen by the decoder $D'(i)$, there exist a predicate $f_{i,I} \colon \bitset^{\locality'} \to \tritset$. To simplify notation, since the query set $I$ would always be clear from the context, we write $f_i$ to refer to the corresponding $f_{i,I}$.} Hereafter, we will identify the relaxed decoder $D'(i)$ with the predicate-distribution pair $(f_i, \mu_i)$.

Using the relaxed decoder $D'$, we construct a sample-based \emph{global} decoder for the code $C$, which with high probability decodes the \emph{entire} message of a \emph{perfectly valid} codeword, using $O(\sbparam)$ samples. Since it is information theoretically impossible to recover a $k$-bit message via $o(k)$ queries, this would imply that $n = \Omega(k^{\locality'^2/(2\locality'^2 - 1)})$. (See \cref{sec:putting-it-all-together} for a precise argument.) 

Loosely speaking, the global decoder works as follows. First, it samples each coordinate independently with certain probability $p$ and tries to obtain local views of the relaxed local decoder $D'(i)$ for each location $i \in [k]$, while \emph{reusing} the same samples.

Since the structure of the local views of $D'(i)$ does not guarantee that any local view would be captured in the aforementioned ``binomial sampling'' stage, the global decoder considers a subset of the local views of $D'$, which has the structure of a daisy; that is, it has a kernel of size that is smaller than the decoding radius such that outside of the kernel each point is covered by a small number of sets.

Our analysis will show that the binomial sampling stage is highly likely to yield petals of the daisy, but not its kernel, which is necessary to complete the petals into local views of the relaxed local decoder $D'$. To deal with that, the global decoder enumerates over all of the possible values of the kernel, and if one of the kernel assignments leads to a consensus of the decoding values for the corresponding local views, it outputs that value. (See \cref{sec:techniques} for a more detail high-level overview of the global decoder).

A precise description of the global decoder is given next.

\begin{construction}
\label{con:global-decoder}
\upshape
Let $\{(f_i, \mu_i)\}_{i \in [k]}$ be the predicate-distribution pairs corresponding to the relaxed decoder $D'$ obtained in \cref{sec:preprocessing}.
For every $i \in [k]$, denote by $K_i$ and $\petals_i$ the kernel and petals of the daisy $\sunflower_i$ obtained by invoking \cref{lem:relaxed-sunflower} with respect to the support of $\mu_i$.

The global decoder $G$ receives query access to a string $w \in \bitset^n$ and performs the following steps.

\begin{enumerate}
    \item \emph{Binomial sampling:} Set $p = \binomparam$, and query each coordinate $j \in[n]$ with probability $p$. Denote by $Q$ the set of all coordinates that were queried. (If the size of $Q$ exceeds the query complexity, the global decoder can simply reject.)
    \item \emph{Local view generation:} For each $i \in [k]$, let $\petals'_i \subseteq \petals_i$ be the collection of all petals that were fully queried in the binomial sampling step (i.e., $\petals'_i = \set{P \in \petals_i : P \subseteq Q}$). For every petal $P \in \petals'_i$, denote by $w|_P$ the restriction of the input $w$ to $P$.
    \item \emph{Global decoding:} for every $i \in [k]$, decode $x_i$ by performing the following steps for every assignment $\kappa \in \bitset^{|K|}$ to the kernel.
    \begin{enumerate}
        \item For every fully-queried petal $P \in \petals'$ and set $S \in \sunflower_i$ that contains $P$, let $a_{S,\kappa}$ be the assignment to $S$ whose petal assignment is $w|_P$ and kernel assignment is $\kappa|_{S\setminus P}$. Let $A$ be the set of all such assignments.
        \item If there exists $b \in \bitset$ such that $f_i(a_{S,\kappa})=b$ for every $a_{S,\kappa} \in A$, then output $b$ and proceed to decode $x_{i+1}$.
    \end{enumerate}
\end{enumerate}
\end{construction}

We proceed to analyse \cref{con:global-decoder} in \cref{sec:analysis}.

\subsection{Analysis of the construction}
\label{sec:analysis}
The following lemma shows that the construction described in \cref{sec:construction} is a (global) decoder with sublinear query complexity (in the code's blocklength) that, with high probability, decodes the entire message encoded in a perfectly valid codeword.

\begin{lemma}
\label{lem:analysis}
The algorithm $G$ defined in \cref{con:global-decoder}. Given query access to a valid codeword $w=C(x)$ for $x \in \bitset^k$, the algorithm $G$ makes $O(\sbparam)$ queries to $w$ and satisfies that $\Pr[G^w = x] \geq 2/3$. 
\end{lemma}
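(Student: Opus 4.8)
The plan is to prove \cref{lem:analysis} in two parts: a routine bound on the query complexity, and the main argument that $G$ decodes every $x_i$ correctly with high probability. For the query complexity, note that $|Q|$ is a sum of $n$ independent Bernoulli variables each with mean $p = \binomparam$, so $\mathbb{E}[|Q|] = p n = \sbparam$; a Chernoff bound gives $|Q| = O(\sbparam)$ except with probability, say, $2^{-\Omega(\sbparam)}$, and in the bad case $G$ simply rejects (so this event is absorbed into the $1/3$ error budget). Since the three nested loops of the global-decoding step only read coordinates in $Q$, no further queries are made. This part I expect to be essentially immediate.

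For correctness, fix $i \in [k]$. By \cref{lem:relaxed-sunflower} applied to $\supp(\mu_i)$ (which has size $cn$ for a constant $c$, since $D'$ has randomness complexity $\log n + O(1)$), we obtain a \daisy{t_i} $\sunflower_i$ with kernel $K_i$ of size at most $\locality' n^{1-s_i/\locality'}$, petals of size at most $s_i$, degree bound $t_i = c n^{m_i/\locality'}$ with $m_i = \max\{1, s_i - 1\}$, and $\mu_i(\sunflower_i) \geq 1/\locality'$. I would first establish the two ``consensus'' facts for the true codeword $w = C(x)$: (i) for the correct kernel assignment $\kappa^* = w|_{K_i}$, every $S \in \sunflower_i$ satisfies $f_i(a_{S,\kappa^*}) = f_i(w|_S) = x_i$ by perfect completeness; (ii) for \emph{every} $\kappa \in \bitset^{|K_i|}$, the set of $S \in \sunflower_i$ with $f_i(a_{S,\kappa}) = \neg x_i$ has $\mu_i$-weight less than $\eps' = 1/\locality'^2 < 1/\locality'$ — otherwise, since changing $w$ on $K_i$ keeps it $\decrad$-close to $C(x)$ (as $|K_i| = o(\decrad n)$), we would contradict relaxed decoding. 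Consequently the ``good'' sets — those $S$ with $f_i(a_{S,\kappa}) \in \{x_i, \bot\}$ for the given $\kappa$ — have $\mu_i$-weight at least $1/\locality' - 1/\locality'^2$ for every $\kappa$.

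The heart of the argument is then a covering-to-counting step followed by a sampling bound, done per kernel assignment $\kappa$. Restricting $\sunflower_i$ to its good sets for $\kappa$ yields a sub-\daisy{t_i} $\sunflower_i^\kappa$ of $\mu_i$-weight $\geq 1/\locality' - 1/\locality'^2$; I claim its petals must cover more than $\decrad n$ coordinates, because if the union $U^\kappa = \bigcup_{S \in \sunflower_i^\kappa} S$ had $|U^\kappa| \leq \decrad n$, then overwriting $w$ on $U^\kappa \cup K_i$ with a codeword $C(y)$ having $y_i \neq x_i$ would keep us within the decoding radius while forcing every good local view (which lies inside $U^\kappa \cup K_i$) to evaluate to $x_i \neq y_i$ or $\bot$ — contradicting relaxed decoding for index $i$ relative to $C(y)$. (This needs $2\decrad < \delta_C$ so $C(y)$ is still the unique nearby codeword; that holds by definition of decoding radius, and one should pick $y$ agreeing with $x$ outside $U^\kappa \cup K_i$, using linear distance / a counting argument to ensure such $y$ with $y_i \ne x_i$ exists.) Now apply \cref{lem:simple-daisy} to $\sunflower_i^\kappa$ with $c' = \decrad$: we extract a \emph{simple} daisy $\sunflower_0^\kappa \subseteq \sunflower_i^\kappa$ with kernel $K_i$ and at least $d \coloneqq \frac{\decrad n - |K_i|}{t_i s_i^2} = \Omega(n^{1 - s_i/\locality' + 1/\locality'})$ pairwise-disjoint petals (and $d = \Omega(n^{1 - 1/\locality'})$ in the $s_i = 1$ case). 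Since the petals are disjoint and of size at most $s_i$, each is fully contained in $Q$ independently with probability $p^{s_i} = n^{-s_i/2\locality'^2} \geq n^{-1/2\locality'}$, so the probability that $G$ fully queries \emph{no} good petal for $\kappa$ is at most $(1 - p^{s_i})^d \leq \exp(-p^{s_i} d) \leq \exp(-\Omega(n^{1 - s_i/2\locality'}))$, which is comfortably $\leq 2^{-\Omega(|K_i|)}/(10k)$ since $|K_i| \leq \locality' n^{1 - s_i/\locality'}$. A union bound over the $2^{|K_i|}$ kernel assignments shows that, except with probability $\leq 1/(10k)$, for every $\kappa$ the collection $\petals_i'$ contains at least one good petal; in that event no $\kappa$ produces a unanimous-$\neg x_i$ set $A$, while $\kappa^*$ produces a unanimous-$x_i$ set $A$ (as long as at least one petal of $\sunflower_i$ is queried, which is implied), so $G$ outputs exactly $x_i$. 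Finally, a union bound over $i \in [k]$ together with the query-complexity event gives $\Pr[G^w = x] \geq 1 - 1/10 - o(1) \geq 2/3$.

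The main obstacle I anticipate is the covering-to-counting step: making precise that ``few good petals $\Rightarrow$ few covered coordinates $\Rightarrow$ soundness violation'' requires carefully choosing the competing codeword $C(y)$ so that it disagrees with $x$ only on index $i$ and a small set of coordinates, remains within the (unique-decoding) ball around $w$, and genuinely forces all good local views off the value $y_i$; the interplay between ``good for $\kappa$'' (defined via the guessed kernel) versus the actual values of $w$ on the kernel needs to be handled so that the argument is run against the hypothetical oracle $C(y)$ rather than $w$ itself. The sampling/union-bound arithmetic and the query-complexity bound are routine by comparison, and I would relegate the tuning of constants (e.g. that $1/\locality'^2$ soundness suffices, that $|K_i| = o(\decrad n)$) to short remarks.
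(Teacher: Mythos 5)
Your overall structure mirrors the paper's proof closely: Chernoff for the query bound; the two consensus facts (correct kernel assignment $\kappa^*$ is unanimous on $x_i$ by completeness; any $\kappa$ has bad sets of $\mu_i$-weight $<\eps'$ by relaxed decoding after modifying $w$ only on $K_i$); the covering-to-counting step to get a linear number of good petals; the simple daisy lemma to get disjoint petals; a per-$\kappa$ binomial tail bound of $2^{-|K_i|}/(10k)$; and the double union bound over $\kappa$ and $i$. That is precisely the skeleton of the paper's proof, split there into \cref{clm:completeness} and \cref{clm:soundness}.

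The place where your sketch goes astray is exactly the step you flag as the main obstacle: the covering-to-counting argument. You propose to pick a codeword $C(y)$ that \emph{agrees with $C(x)$ outside $U^\kappa\cup K_i$} with $y_i\neq x_i$, overwrite $w$ on $U^\kappa\cup K_i$ with $C(y)$, and then ``contradict relaxed decoding for index $i$ relative to $C(y)$.'' Two issues. First, no such $y\neq x$ exists: if $C(y)$ agreed with $C(x)$ outside a set of size $\le |U^\kappa|+|K_i| < \delta_C n$, then $C(y)=C(x)$ by the distance of the code. Second, after overwriting $w$ on $U^\kappa\cup K_i$ with an \emph{arbitrary} codeword $C(y)$, the resulting string $w''$ is $\decrad$-close to $C(x)$, not to $C(y)$ (indeed $\dist(w'',C(y))\ge \delta_C - (|U^\kappa|+|K_i|)/n$), so relaxed decoding must be invoked with $C(x)$ as the reference codeword, and the views that you want to point at output $y_i$, not $x_i$. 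The correct version — which is also simpler than you anticipate and needs no counting argument for the existence of $y$ — is: take any $y$ with $y_i\neq x_i$, overwrite $w$ only on $\bigcup_{S\in\good}S$ (not on the kernel too) with $C(y)$ to get $w''$; since $|\bigcup_{S\in\good}S|\le\decrad n$ by assumption, $w''$ is $\decrad$-close to $C(x)$; but for every $S\in\good$ we have $w''|_S = C(y)|_S$, hence $f_i(w''|_S)=y_i\neq x_i$ by perfect completeness, so $\mu_i$-weight $\ge\mu_i(\good)\ge\eps'$ of local views return the wrong value, contradicting relaxed decoding relative to $C(x)$. (Note that the kernel assignment $\kappa$ only enters through the bound $\mu_i(\good)\ge 1/\locality'-1/\locality'^2$; once $\good$ is in hand, the covering argument is oblivious to $\kappa$.) With this correction, your proof matches the paper's.
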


\begin{proof}
    Let $x \in \bitset^k$, and denote $w=C(x)$. Let $Q$ be the set of all coordinates that the global decoder $G$ queried after sampling each element in $[n]$ with probability $p = \binomparam$. Note that by standard binomial tail bounds, with probability at least $9/10$ the total query complexity of $G$ is $|G| = O(\sbparam)$, as required (otherwise $G$ can simply abort). Suppose hereafter that this is the case.
    
    Denote by $z \in \bitset^k$ the output of $G$. We show that for every $i \in [k]$, the probability that $z_i \neq x_i$ is less than $1/10k$, and thus by a union bound $\Pr[G^w = x] \geq 9/10$. Thus, the total probability of success (including obtaining the desired query complexity) is $2/3$.
    
    Fix $i \in [k]$. Recall that the global decoder $G$ decodes $x_i$ by considering a subset $\sunflower_i$ of local views of the relaxed local decoder $D'(i)$, derived from the daisy lemma (\cref{lem:relaxed-sunflower}). More precisely, during the binomial sampling stage $G$ queries the coordinates $Q$ and obtains a collection of fully queried petals of the daisy $\sunflower_i$, which we denote by $\petals'_i = \set{P \in \petals_i : P \subseteq Q}$.
    
    However, for the global decoder $G$ to rule according to the relaxed local decoder $D'(i)$, it needs not only the fully-queried petals in $\petals'_i$, but rather the complete local views of $D'(i)$ that contain these petals, i.e., the collection of subsets  $\set{S \in \sunflower_i : P \subseteq S}$. To this end $G$ needs to obtain the value of the kernel $K_i$ of $\sunflower_i$. 
    
    Since there is no guarantee that the kernel was fully queried (i.e., that $K_i \subseteq Q$), the global decoder $G$ enumerates over all possible assignments $\kappa \in \bitset^{|K_i|}$ to the kernel, and considers the output of $D'(i)$ on each local view $S \in \sunflower_i$ that consists of $w$ restricted to a petal $P \in \petals'_i$ and value of $k$.
    
    Recall that for every fully-queried petal $P \in \petals'$ and set $S \in \sunflower_i$ that contains $P$, we denote by $a_{S,\kappa}$ the assignment to $S$ whose petal assignment is $w|_P$ and kernel assignment is $\kappa|_{S\setminus P}$. Observe that, by definition (see \cref{con:global-decoder}), the global decoder $G$ outputs the correct value $x_i$ if the set of fully-queried petals $\petals'_i$ is non empty, and the following conditions hold:
        \begin{enumerate}
        \item There exists a kernel assignment $\kappa^*$ such that for \emph{every} fully-queried petal $P \in \petals'_i$ and set $S \in \sunflower_i$ that contains $P$, the relaxed local decoder $D'(i)$ outputs the correct value $x_i$ given the local view $a_{S,\kappa^*}$ (i.e. $f_i(a_{S,\kappa^*}) = x_i$).
        \item For any other kernel assignment $\kappa \in \bitset^{|K_i|}$, \emph{there exists} a fully-queried petal $P \in \petals'_i$ and set $S \in \sunflower_i$ that contains $P$, such that $D(i)$ outputs the correct value $x_i$ or aborts given the local view $a_{S,\kappa}$  (i.e. $f_i(a_{S,\kappa}) \in \set{x_i, \bot}$).
    \end{enumerate}
    Note that the first item above guarantees that at least one kernel assignment would lead the global decoder $G$ to output decode the correct value $x_i$, whereas the second item guarantees that with high probability no kernel assignment would lead $G$ to output the incorrect value $\neg x_i$.
 
    The next two claims establish that the foregoing conditions are satisfied with high probability. In the following, recall that $\locality' = O(1)$, and by \cref{lem:relaxed-sunflower}, there exists $s \in [\locality']$ and $m = \max\{1,s-1\}$ such that, denoting $t = c n^{m/\locality'}$, the collection $\sunflower_i$ is a $t$-daisy with a kernel $K_i$ of size at most $\locality' \cdot n^{1-s/\locality'}$ and $\mu_i(\sunflower_i) \geq 1/\locality'$.
    
    \begin{claim}
        \label{clm:completeness}
        There exists a kernel assignment $\kappa^* \in \bitset^{|K_i|}$ such that the assignment $a_{S,\kappa^*}$, for every $S \in \sunflower_i$ containing a queried petal $P \in \petals'_i$, satisfies $f_i(a_{S,\kappa^*}) = x_i$. Furthermore, $\petals'_i$ is non-empty with probability at least $\frac{9}{10k}$.
    \end{claim}
    
    \begin{proof}
        Set $\kappa^*$ to be the kernel assignment that coincides with $w$; that is $\kappa = w|_{K_i}$. In this case, by definition, for every $S \in \sunflower_i$ the assignment $a_{S,\kappa^*}$ equals to the local view $w|_S$ of the relaxed local decoder $D'(i)$, for a valid codeword $w = C(x)$.
        
        By the perfect completeness of $D'$, it holds that given query access to $w = C(x)$, any local view of $D'$ leads to outputting $x_i$, and thus  $f_i(a_{S,\kappa^*}) = x_i$ for all $S \in \sunflower_i$. It remains to show that with high probability there exists a least one petal of $\sunflower_i$ that was queried in the binomial sampling stage.
        
        To this end, we next argue that not only $\sunflower_i$ has large density, but that it also covers a large fraction of the domain $[n]$. Recall that the weight that the local decoder $D'(i)$ gives $\sunflower_i$ is larger than the soundness error of $D'(i)$, i.e., $\mu_i(\sunflower_i) \geq 1/\locality' \geq \eps'$. Thus, the fractional size of the set of all elements covered by $\sunflower_i$ must be larger than the decoding radius $\decrad$ (otherwise, replacing $\sunflower_i$ with the values of a codeword that disagrees with $x_i$ would leave $w$ within the decoding radius, and thus break the soundness condition), i.e., $|\cup_{S \in \sunflower_i} S| > \decrad n$.
        
        Now, we can invoke \cref{lem:simple-daisy} to ``pluck'' intersecting petals in the \daisy{t} $\sunflower_i$, for $t = c n^{m/\locality'}$, where $m= \max\{1,s-1\}$, and derive a subset that is a simple daisy (a \simpledaisy). Namely, \cref{lem:simple-daisy} implies that there exists a \emph{simple} daisy $\sunflower^*_i \subseteq \sunflower_i$ whose kernel is $K_i$ (same as $\sunflower_i$), such that
        \begin{equation*}
            |\sunflower^*_i| \geq 
            \frac{\decrad n - |K_i|}{t s^2} \geq 
            \frac{\decrad n - \locality' \cdot n^{1-s/\locality'}}{c n^{m/\locality'} s^2} =
            \Omega \left( n^{1-(m/\locality')} \right) \enspace.
        \end{equation*}
        Denote the set of petals of $\sunflower^*_i$ by $\petals^*_i$, and note that $|\petals^*_i| = (|\sunflower^*_i| - |K_i|)/s = \Omega(n^{1-(m/\locality')})$ .
        
        Since the petals of a simple daisy are \emph{disjoint}, observe that for any petal $P \in \petals^*_i$, the probability of querying all $s$ elements of $P$ during the binomial sampling step is $p^s = n^{-{s/2\locality'^2}}$. Since $\sunflower^*_i$ contains $d \coloneqq \Omega(n^{1-(m/\locality')})$ pairwise disjoint petals, the probability that no petal of $\sunflower^*_i$ was queried is
        \begin{equation*}
            \Pr[\petals^*_i = \phi] = (1-p^s)^d = 
            \left( 1 - \frac{1}{n^{{s/(2\locality'^2)}}} \right)^{\Omega(n^{1-(m/\locality')})} \leq 
            e^{\Omega\big(-n^{1 - \frac{m}{\locality'} - \frac{s}{2\locality'^2} }\big)} \leq
            \frac{1}{10k}\enspace,
        \end{equation*}
        and so $\petals'_i$ is non-empty with probability at least $9/(10k)$, concluding the proof of \cref{clm:completeness}.
    \end{proof}
    
    \begin{claim}
        \label{clm:soundness}
        For every kernel assignment $\kappa \in \bitset^{|K_i|}$, with probability at least $1 - \frac{2^{-|K_i|}}{10k}$, there exists a queried petal $P \in \petals'_i$ and $S \in \sunflower_i$ containing $P$ such that $f_i(a_{S,\kappa}) \in \set{x_i, \bot}$.
    \end{claim}
    
    \begin{proof}
        Let $\kappa \in \bitset^{|K_i|}$ be a kernel assignment. By \cref{lem:relaxed-sunflower}, the kernel $K_i$ of the \daisy{t} $\sunflower_i$ satisfies $|K_i| \leq \locality' \cdot n^{1-s/\locality'}$. In particular, note that the fractional size of the kernel is smaller than the decoding radius $\decrad$.
        
        Recall that the global decoder $G$ gets access to a perfectly valid codeword $w=C(x)$, and emulates query access to a string $z$ that agrees with $w$ outside of the kernel and with $\kappa$ inside the kernel (i.e., $z_j = w_j$ for every $j \in [n] \setminus K_i$, and $z|_{K_i}=\kappa$). Since $|K_i| \leq \decrad n$, we have that $z$ is within the decoding radius of the relaxed local decoder $D'$.
        
        By the relaxed decoding condition of $D'$, it holds that given query access to $z$, with probability at least $1-\eps'$, the local view of $z$ chosen by $D'(i)$ would lead to either outputting the correct value $x_i$ or the abort symbol $\bot$; more precisely,
        \begin{equation}
        \label{eq:weight_of_good_views}
            \Pr_{I \sim \mu_i} \left[ f_i(z|_{I}) \in \set{x_i, \bot} \right] \geq 1-\eps' = 1 - \frac{1}{\locality'^2} \enspace.
        \end{equation}
        
        Recall that the global decoder relies on a subset (a \daisy{t}) $\sunflower_i$ of the local views of $D'(i)$ to perform the decoding. We argue that $\sunflower_i$ contains a high density (according to $\mu_i$) set of local views that correspond to local views that would lead $D'(i)$ to either outputting the correct value $x_i$ or the abort symbol $\bot$. To this end, let $\good \subseteq \sunflower_i$ be the subset of all ``good''$S \in \sunflower_i$ such that $f_i(z|_{S}) \in \set{x_i, \bot}$. Note that $\good$ is also a \daisy{t} with respect to the kernel $K_i$, where $t = c n^{m/\locality'}$.
        
        By \cref{lem:relaxed-sunflower}, we have that $\mu_i(\sunflower_i) \geq 1/\locality'$. Thus,  \cref{eq:weight_of_good_views} implies that
        \begin{equation}
            \label{eq:weight_of_good_petals}
            \mu_i(\good) \geq \frac{1}{\locality'} - \frac{1}{\locality'^2} \geq \eps' \enspace.
        \end{equation}
        Similarly to the argument \cref{clm:completeness}, we observe that not only $\good$ has large density, but that it also covers a large fraction of the domain $[n]$. More accurately, by \cref{eq:weight_of_good_petals} we have that the density of $\good$ is larger than the soundness error $\eps'$, and so $|\cup_{S \in \good} S| > \decrad n$ (otherwise, replacing $\good$ with the values of a codeword that disagrees with $x_i$ would leave $w$ within the decoding radius, and thus break the 
        soundness condition). 
        
        We conclude the proof of the claim by showing that with probability at least $1 - \frac{2^{-|K_i|}}{10k}$ there exists a least one petal of $\good$ that was fully queried in the binomial sampling stage, via a similar strategy as in \cref{clm:completeness}. To this end, we invoke \cref{lem:simple-daisy} to obtain a \emph{simple} daisy $\good^* \subseteq \good$ whose kernel is $K_i$ (same as the kernel of $\good$ and $\sunflower_i$), such that $|\good^*| \geq \Omega \left( n^{1-(m/\locality')} \right)$, where $m= \max\{1,s-1\}$. Denote the set of petals of $\good^*$ by $\petals^*$, and note that
        \begin{equation*}
            |\petals^*_i| = \frac{|\good^*| - |K_i|}{s} = \Omega\left(n^{1-(m/\locality')}\right) \enspace.    
        \end{equation*}
        
        Since the petals of a simple daisy are \emph{disjoint}, observe that for any petal $P \in \petals^*_i$, the probability of querying all $s$ elements of $P$ during the binomial sampling step is $p^s = n^{-{s/2\locality'^2}}$. Since $\good^*$ contains $d \coloneqq \Omega(n^{1-(m/\locality')})$ pairwise disjoint petals and $|K_i| \leq \locality' \cdot n^{1-s/\locality'}$, the probability that no petal of $\good^*$ was queried is
        \begin{equation*}
            \Pr[\petals^*_i = \phi] =
            e^{\Omega\big(-n^{1 - \frac{m}{\locality'} - \frac{s}{2\locality'^2} }\big)}
            \leq \frac{2^{-|K_i|}}{10k}\enspace,
        \end{equation*}
        which proves \cref{clm:soundness}.
    \end{proof}
    
    Wrapping up the argument, for any $i \in [k]$, by \cref{clm:completeness}, \emph{there exists} a kernel assignment for $K_i$ such that with probability $9/(10k)$ there is a fully queried petal $P \in \petals'_i$ that leads $D'(i)$ to output the correct value (i.e., there exists $S \in \sunflower_i$ that contains the fully queried petal $P$ such that $f_i(a_{S,\kappa^*}) = x_i$).

    Furthermore, for any $i \in [k]$, by \cref{clm:soundness} we have that \emph{for any} kernel assignment for $K_i$, with probability at least $1 - \frac{2^{-|K_i|}}{10k}$, there is a fully queried petal $P \in \petals'_i$ that leads $D'(i)$ to either output the correct value or abort (i.e., there exists $S \in \sunflower_i$ that contains the fully queried petal $P$ such that $f_i(a_{S,\kappa^*}) \in \set{x_i, \bot}$). Taking a union bound over all possible kernel assignments $\kappa \in \bitset^{|K_i|}$, we get that with probability at least $9/(10k)$, there is a set of fully queried petals $\set{P_i \in \petals'_i}_{i \in[k]}$ such that for every $i \in [k]$, the petal $P_i$ leads $D'(i)$ to either output the correct value or abort.
    
    We thus showed that for every $i \in [k]$, the probability that the global decoder $G$ fails to decode the message bit $x_i$ is at most $1/10k$. Finally, taking another union bound over all decoding indices $i\in[k]$, we obtain that the foregoing holds for all $i \in [k]$ \emph{simultaneously} with probability at least $9/10$. This concludes the proof of \cref{lem:analysis}.
\end{proof}

\subsection{Deriving the lower bound}
\label{sec:putting-it-all-together}
Recall that we have started started with a $\locality$-local relaxed LDC $C \colon \bitset^k \to \bitset^n$ with a constant decoding radius  $\decrad$ and $\locality = O(1)$, and that we wish to show that the blocklength of $C$ satisfies $n = \Omega\left(k^{1+ \alpha}\right)$, where $\alpha = \alpha(\locality,\decrad)$ is a constant.

So far, we have shown the there exists a global decoder $G$ for the code $C$, which with probability $2/3$ decodes the \emph{entire} message of a \emph{perfectly valid} codeword, using $O(\sbparam)$ samples, where $\locality' = O( 2^\locality \cdot \log(\locality))$. The following simple claim shows that decoding an entire codeword requires a least a number of queries that is linear in the dimension of the code.
\begin{claim}
    \label{clm:wrapup}
    Let $C:\bitset^k \to \bitset^n$ be a code. If there exists a randomised algorithm $\mathcal{A}$ that makes $q$ queries to a codeword $C(x)$, for some $x\in\bitset^k$, such that $\Pr \left[ \mathcal{A}^{C(x)}= x \right] \geq 2/3$, then $k = \Omega(q)$.
\end{claim}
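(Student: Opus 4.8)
The plan is to prove that $q=\Omega(k)$ (equivalently, $k=O(q)$): a $q$-query algorithm cannot reconstruct a $k$-bit message unless $q$ is a constant fraction of $k$. The argument is the standard information-theoretic one (a decision-tree counting bound). I will use the hypothesis in the form that $\mathcal{A}$ succeeds with probability at least $2/3$ on $C(x)$ for \emph{every} $x\in\bitset^k$ (which is exactly what the global decoder $G$ of \cref{con:global-decoder} provides), and no property of the code beyond injectivity of $x\mapsto C(x)$ is used.

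First I would condition on the internal coins $r$ of $\mathcal{A}$. For each fixed $r$, the algorithm is a deterministic, possibly adaptive, decision tree of depth $q$: internal nodes query coordinates of the oracle and leaves are labelled by output strings in $\bitset^k$. Such a tree has at most $2^q$ leaves, so the output $\mathcal{A}^{C(x)}_r$ takes at most $2^q$ distinct values as $x$ ranges over $\bitset^k$; consequently, for every fixed $r$ there are at most $2^q$ messages $x$ with $\mathcal{A}^{C(x)}_r=x$ (at most one per leaf, namely that leaf's label).

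Next I would average over a uniformly random message $x\in\bitset^k$. On the one hand, applying the hypothesis pointwise and using linearity of expectation gives $\Pr_{x,r}[\mathcal{A}^{C(x)}_r=x]\ge 2/3$. On the other hand, the per-$r$ bound above gives $\Pr_{x,r}[\mathcal{A}^{C(x)}_r=x]=2^{-k}\,\mathbb{E}_r\big[\,\bigl|\{x:\mathcal{A}^{C(x)}_r=x\}\bigr|\,\big]\le 2^{q-k}$. Combining the two bounds, $2^{q-k}\ge 2/3$, hence $q\ge k-\log_2(3/2)>k-1$, so $q\ge k$ since $q$ is a non-negative integer; in particular $q=\Omega(k)$. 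Finally, substituting the query bound $q=O(\sbparam)$ of the global decoder yields $k=O(\sbparam)$, and therefore $n=\Omega\bigl(k^{1+\alpha}\bigr)$ for a constant $\alpha>0$, as spelled out in \cref{sec:putting-it-all-together}.

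The claim is elementary, so I do not expect a real obstacle; the only step needing a little care is the reduction from the randomised, adaptive decoder to the deterministic count — one must fix $r$ \emph{before} bounding the number of transcripts, and note that it is the number of distinct \emph{outputs}, not of inputs, that is capped by $2^q$. An essentially equivalent alternative is Fano's inequality: with $x$ uniform, $H\bigl(x\mid \mathcal{A}^{C(x)}\bigr)\le 1+k/3$, so the mutual information between $x$ and the (at most $q$-bit) transcript is at least $2k/3-1$, which again forces $q=\Omega(k)$.
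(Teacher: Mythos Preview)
Your proposal is correct and follows essentially the same approach as the paper: fix the randomness (the paper phrases this as Yao's minimax), count that a depth-$q$ decision tree has at most $2^q$ possible outputs, and average over a uniformly random message to force $q\ge k$. Your treatment is in fact slightly more careful about adaptivity than the paper's own write-up (which speaks of a fixed query set $I$), and you correctly note that the intended conclusion is $q=\Omega(k)$ rather than the stated $k=\Omega(q)$.
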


\begin{proof}
Suppose towards contradiction that the number of queries that $\mathcal{A}$ makes is at most $k-1$. We use (the easy direction of) Yao's minimax principle to show that this implies that $\mathcal{A}$ returns the wrong answer with probability at least $1/2$, in contradiction to the claim's hypothesis. To this end, it suffices to show that there exists a distribution $\mathcal{D}$ over $n$-bit strings on which every deterministic algorithm that makes at most $k-1$ queries errs with probability at least $1/2$.

The distribution $\mathcal{D}$ is defined by simply selecting uniformly at random a message $x \in \bitset^k$ and outputting $C(x)$. Let $\mathcal{B}$ be a \emph{deterministic} algorithm that receives an input $w$ drawn from $\mathcal{D}$ and makes at most $k-1$ queries to $w$.

Let $I$ be the set of queries that the deterministic algorithm makes. Note that $I$ is deterministically fixed and $|I| \leq k-1$. After querying the indices in $I$, the algorithm $B$ can be described by a (deterministic) mapping $f$ from $2^{|I|}$ to $2^{k}$, which maps the local view $w|_I$ to a $k$-bit message. Since $|I| \leq k-1$, the range of $f$ is of size at most $2^{k-1}$, and so it contains at most half of the possible values of $x$. Thus, with probability at least $1/2$ the input $C(x)$ drawn from $\mathcal{D}$ corresponds to an $x \in \bitset^k$ that is not in the range of $f$, and hence $\mathcal{B}$ errs.
\end{proof}

Applying \cref{clm:wrapup} with respect to the global decoder $G$ implies that $n = \Omega(k^{\locality'^2/(2\locality'^2 - 1)})$, which concludes the proof of \cref{thm:main}.

\section*{Acknowledgements}
We are grateful to Oded Goldreich for numerous insightful comments and suggestions that significantly improved the exposition of this paper. We also thank Arnab Bhattacharyya and Sivakanth Gopi for a helpful discussion regarding LDC and LCC lower bounds. We thank Noga Ron-Zewi for extended discussions on constructions of relaxed LDCs.

\bibliographystyle{alpha}
\bibliography{references}

\appendix
\section{Deferred proofs}
\label{apx:deferred}
We provide the proofs of lemmas and claims that were deferred from \cref{sec:proof}.

\subsection{Proof of the daisy lemma}
\label{apx:gen_sunflower}
Let $\mathcal{T}$ be a collection of $c n$ subsets of $[n]$ of size $\locality$ each. Let $\mu$ be a distribution over $2^{[n]}$, whose support is $\mathcal{T}$.
We show that for some $s \in [\locality]$, and $m= \max\{1,s-1\}$, there exists a $c n^{m/\locality}$-daisy $\sunflower \subseteq \mathcal{T}$ with a kernel of size at most $\locality \cdot n^{1-s/\locality}$ and petals of size at most $s$, such that $\mu(\sunflower) \geq 1/\locality$.
    
Our high-level strategy consists of two steps: (1) we first iteratively construct a sequence of daisies $\sunflower_1,\sunflower_2,\dots,\sunflower_\locality \in 2^{[n]}$ with kernels $K_1,K_2,\dots,K_\locality \subseteq [n]$, respectively; and (2) we then argue that there exist $s \in [\locality]$ for which the daisy $\sunflower_s$ with respect to kernel $K_s$ satisfies the desired conditions. We construct this sequence of daisies as follows.

\begin{construction}
\label{con:sequence}
Given  a collection $\mathcal{T}$ of $c n$ subsets of $[n]$ of size $\locality$ each, we construct a sequence of daisies $\sunflower_1,\sunflower_2,\dots,\sunflower_\locality \in 2^{[n]}$ with kernels $K_1,K_2,\dots,K_\locality \subseteq [n]$ as follows. 
\begin{enumerate}
    \item Set $\mathcal{T}_1 = \mathcal{T}$.
    \item Perform the following steps iteratively, for $i \in [\locality]$:
    \begin{enumerate}
        \item\label{item:core} let $K_i$ be the set of all $j\in [n]$ such that $\mathrm{deg}_{\mathcal{T}_i}(j) > cn^{i/\locality}$.
        \item\label{item:newM} let $\sunflower_i$ be the family of all subsets $T\in \mathcal{T}_i$ such that $|T\setminus K_i|\leq i$.
        \item\label{item:newF} let $\mathcal{T}_{i+1}$ be $\mathcal{T}_i \setminus \sunflower_i$.
    \end{enumerate}
\end{enumerate}
\end{construction}

We proceed to show three structural claims regarding the daisies in the sequence $\sunflower_1,\sunflower_2,\dots,\sunflower_\locality$; Namely: (1) bounding the sizes of their kernels,
(2) showing their union covers the original collection $\mathcal{T}$, and
(3) bounding the number of sets that contain each point outside of their kernels.
Subsequently, we will show that at least one of these daisies satisfy all requirements of the lemma. 

We begin with the following claim, which shows that for each $i\in [\locality]$, the daisy $\sunflower_i$ has a sufficiently small kernel.

\begin{claim}\label{clm:CoreSUB}
    For every $i\in [\locality]$, it holds that $|K_i| < \locality n^{1-i/\locality}$.
\end{claim}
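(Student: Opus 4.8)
The plan is to bound $|K_i|$ by a simple double-counting argument on the collection $\mathcal{T}_i$ (or, more robustly, on the original collection $\mathcal{T}$). First I would recall the two facts I get to use: every set in $\mathcal{T}$ — hence in $\mathcal{T}_i \subseteq \mathcal{T}$ — has size exactly $\locality$, so $\sum_{T \in \mathcal{T}_i} |T| = \locality |\mathcal{T}_i| \leq \locality \cdot cn$; and by construction (item~\ref{item:core}) every $j \in K_i$ has $\mathrm{deg}_{\mathcal{T}_i}(j) > cn^{i/\locality}$. Summing degrees over $K_i$ only,
\[
    cn^{i/\locality}\cdot |K_i| \;<\; \sum_{j \in K_i} \mathrm{deg}_{\mathcal{T}_i}(j) \;\leq\; \sum_{j \in [n]} \mathrm{deg}_{\mathcal{T}_i}(j) \;=\; \sum_{T \in \mathcal{T}_i} |T| \;\leq\; \locality \cdot cn,
\]
where the last equality is the standard bipartite edge-count (each set $T$ of size $\locality$ contributes $\locality$ to $\sum_j \mathrm{deg}_{\mathcal{T}_i}(j)$). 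Dividing both sides by $cn^{i/\locality}$ yields $|K_i| < \locality n^{1 - i/\locality}$, which is exactly the claim.

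The main steps in order are therefore: (1) state the size bound $\sum_{T \in \mathcal{T}_i}|T| \le \locality cn$, using $|\mathcal{T}_i| \le |\mathcal{T}| = cn$ and $|T| = \locality$; (2) invoke the defining property of $K_i$ to get the strict lower bound $\mathrm{deg}_{\mathcal{T}_i}(j) > cn^{i/\locality}$ for each $j \in K_i$; (3) chain these via $cn^{i/\locality}|K_i| < \sum_{j\in K_i}\mathrm{deg}_{\mathcal{T}_i}(j) \le \sum_{T\in\mathcal{T}_i}|T| \le \locality cn$; (4) rearrange. One small point worth a sentence: $cn^{i/\locality} > 0$, so dividing preserves the inequality, and the strictness carries through to give the desired strict bound $|K_i| < \locality n^{1-i/\locality}$.

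I do not expect any genuine obstacle here — this is a one-line averaging/Markov-type estimate. The only thing to be careful about is bookkeeping: making sure we compare $\mathrm{deg}_{\mathcal{T}_i}$ (not $\mathrm{deg}_{\mathcal{T}}$) against the threshold $cn^{i/\locality}$ that appears in the construction, and remembering that $\mathcal{T}_i$ is a sub-collection of $\mathcal{T}$ so its total size is at most $cn$ — one could also just use $\mathrm{deg}_{\mathcal{T}_i}(j) \le \mathrm{deg}_{\mathcal{T}}(j)$ and count over $\mathcal{T}$ directly if preferred, getting the same bound. No induction on $i$ is needed; the claim holds for each $i$ independently from the construction's invariants.
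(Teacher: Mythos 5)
Your proof is correct and follows essentially the same double-counting argument as the paper: bound $\sum_{j\in K_i}\mathrm{deg}_{\mathcal{T}_i}(j)$ from below by $|K_i|\,cn^{i/\locality}$ using the defining property of $K_i$, from above by $\locality|\mathcal{T}_i|\le \locality cn$, and rearrange. In fact your write-up is slightly more careful than the paper's, which writes $\sum_{j\in K_i}\mathrm{deg}_{\mathcal{T}_i}(j) = \locality|\mathcal{T}_i|$ where an inequality (summing over $K_i$ versus all of $[n]$) is meant.
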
 
\begin{proof}
    By Item~\ref{item:newF} of \cref{con:sequence}, for every $i\in [\locality]$, it holds that $\mathcal{T}_i \subseteq \mathcal{T}_1$.
    Hence, as $|\mathcal{T}_1| = cn$, we know that $|\mathcal{T}_i| \leq cn$, for every $i\in[\locality]$.
    
Fix $i\in [\locality]$.
    By Item~\ref{item:core} of \cref{con:sequence}, 
    \begin{equation}\label{equ:degCore}
        \sum_{j\in K_i} \mathrm{deg}_{\mathcal{T}_i}(j) > |K_i|cn^{i/\locality} 
    \end{equation}
    Since all the subsets of $[n]$ have cardinality $\locality$, 
    \begin{equation}\label{equ:sumDeg}
        \sum_{j\in K_i} \mathrm{deg}_{\mathcal{T}_i}(j) =  \locality|\mathcal{T}_i| \leq \locality c n
    \end{equation}
The claim follows from \cref{equ:degCore,equ:sumDeg}.
\end{proof} 

The next claim shows that the union of all subsets in all the daisies $\sunflower_1,\sunflower_2,\dots,\sunflower_\locality$ covers the original collection of subsets $\mathcal{T}$ (equivalently, $\mathcal{T}_1$).
 
 \begin{claim}\label{clm:partition}
     $\bigcup_{i\in[\locality]}\sunflower_i = \mathcal{T}_1$.    
 \end{claim}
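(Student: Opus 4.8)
The plan is to show the two inclusions $\bigcup_{i\in[\locality]}\sunflower_i \subseteq \mathcal{T}_1$ and $\mathcal{T}_1 \subseteq \bigcup_{i\in[\locality]}\sunflower_i$. The first is immediate from the construction: by Item~\ref{item:newM} each $\sunflower_i$ is a subfamily of $\mathcal{T}_i$, and by Item~\ref{item:newF} (together with $\mathcal{T}_1 = \mathcal{T}$) we have $\mathcal{T}_i \subseteq \mathcal{T}_1$ for every $i$, so $\sunflower_i \subseteq \mathcal{T}_1$ and the union is contained in $\mathcal{T}_1$.

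For the reverse inclusion, the key observation is that $\mathcal{T}_{\locality+1}$ is empty. Indeed, take any $T \in \mathcal{T}_\locality$ (if $\mathcal{T}_\locality$ is already empty we are trivially done, since then $\mathcal{T}_{\locality+1}$ is empty too). Since $T$ has size $\locality$, trivially $|T \setminus K_\locality| \leq \locality$, so by Item~\ref{item:newM} of the construction $T \in \sunflower_\locality$, hence $T \notin \mathcal{T}_{\locality+1} = \mathcal{T}_\locality \setminus \sunflower_\locality$. Thus $\mathcal{T}_{\locality+1} = \emptyset$. Now, since $\mathcal{T}_{i+1} = \mathcal{T}_i \setminus \sunflower_i$ for each $i \in [\locality]$, a straightforward telescoping (i.e., induction on $i$) gives $\mathcal{T}_1 \setminus \mathcal{T}_{i+1} = \bigcup_{j=1}^{i} \sunflower_j$ for every $i \in [\locality]$; note the $\sunflower_j$ are pairwise disjoint since $\sunflower_j \subseteq \mathcal{T}_j$ and $\mathcal{T}_j$ is disjoint from all $\sunflower_{j'}$ with $j' < j$, though disjointness is not even needed here. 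Taking $i = \locality$ yields $\mathcal{T}_1 \setminus \mathcal{T}_{\locality+1} = \bigcup_{j\in[\locality]} \sunflower_j$, and since $\mathcal{T}_{\locality+1} = \emptyset$ we conclude $\mathcal{T}_1 = \bigcup_{i\in[\locality]}\sunflower_i$, as desired.

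There is no real obstacle in this claim; it is a bookkeeping argument about the construction. The only point that requires a moment of care is verifying that the iteration genuinely exhausts $\mathcal{T}_1$ within $\locality$ rounds, which is precisely the statement that $\mathcal{T}_{\locality+1} = \emptyset$ — and this holds simply because every set in the collection has size exactly $\locality$, so at the last stage the condition $|T \setminus K_\locality| \leq \locality$ is vacuously satisfied by every remaining set. The pairwise-disjointness of the $\sunflower_i$ is a free byproduct that will be convenient when summing weights in the subsequent claims, so it may be worth recording it here as well.
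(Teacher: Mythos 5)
Your proof is correct and uses the same key observation as the paper: since every set has size exactly $\locality$, the condition $|T\setminus K_\locality|\le\locality$ is vacuous in the last round, so $\sunflower_\locality=\mathcal{T}_\locality$ and the iteration exhausts $\mathcal{T}_1$. The paper states this more briefly without the explicit telescoping, but the argument is the same.
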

 \begin{proof}
    Note that for $i = \locality$, the condition in Item~\ref{item:newM} of \cref{con:sequence} lets $\sunflower_\locality$ be the family of all subsets $T\in \mathcal{T}_\locality$ such that $|T\setminus K_\locality| \leq \locality$.
    The cardinality of each set in $\mathcal{T}_\locality$ is $\locality$, and hence, trivially, $\sunflower_\locality = \mathcal{T}_\locality$, which in turn means that every set in $\mathcal{T}_1$ is in one of the families in $\{\sunflower_1,\sunflower_2,\dots,\sunflower_\locality\}$ and the claim follows.
 \end{proof}
 
 Next, we show a claim which shows that for each $i\in [\locality]$, every point outside of the kernel of $\sunflower_i$ is incident in only a small number of sets of $\sunflower_i$.
 
 \begin{claim}\label{clm:External}
    For every $i\in [\locality]$, and $j\in [n]\setminus K_i$, $\mathrm{deg}_{\sunflower_i}(j) \leq cn^{\frac{\max\{1,(i-1)\}}{\locality}}$.
 \end{claim}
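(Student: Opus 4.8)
The plan is to exploit a monotonicity property of \cref{con:sequence} that is not yet recorded: the kernels form a \emph{decreasing} chain, $K_1 \supseteq K_2 \supseteq \cdots \supseteq K_\locality$. First I would establish this. By Item~\ref{item:newF}, $\mathcal{T}_i \subseteq \mathcal{T}_{i-1}$, so $\deg{\mathcal{T}_i}{j} \le \deg{\mathcal{T}_{i-1}}{j}$ for every $j\in[n]$; hence if $j \in K_i$, meaning $\deg{\mathcal{T}_i}{j} > cn^{i/\locality} \ge cn^{(i-1)/\locality}$, then also $\deg{\mathcal{T}_{i-1}}{j} > cn^{(i-1)/\locality}$, i.e.\ $j\in K_{i-1}$.

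The case $i=1$ is then immediate: $\max\{1,i-1\}=1$, and since $\sunflower_1\subseteq \mathcal{T}_1$ and $j\notin K_1$, Item~\ref{item:core} gives $\deg{\sunflower_1}{j}\le \deg{\mathcal{T}_1}{j}\le cn^{1/\locality}$.

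For $i\ge 2$ I would fix $j\in[n]\setminus K_i$ and argue that every set of $\sunflower_i$ containing $j$ already ``survives'' to stage $\mathcal{T}_{i-1}$ with $j$ lying outside $K_{i-1}$. Concretely, take $T\in\sunflower_i$ with $j\in T$. Then $T\in\mathcal{T}_i$ with $|T\setminus K_i|\le i$, while $T\notin\sunflower_{i-1}$ (as $T\in\mathcal{T}_i=\mathcal{T}_{i-1}\setminus\sunflower_{i-1}$) forces $|T\setminus K_{i-1}|\ge i$ by Item~\ref{item:newM}. Using $K_i\subseteq K_{i-1}$ we have $T\setminus K_{i-1}\subseteq T\setminus K_i$, so $i\le|T\setminus K_{i-1}|\le|T\setminus K_i|\le i$; hence $T\setminus K_{i-1}=T\setminus K_i$. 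Since $j\in T\setminus K_i$, this yields $j\in T\setminus K_{i-1}$, so $j\notin K_{i-1}$. Consequently either $\deg{\sunflower_i}{j}=0$, or $j\notin K_{i-1}$ and then $\deg{\sunflower_i}{j}\le\deg{\mathcal{T}_{i-1}}{j}\le cn^{(i-1)/\locality}$ by Item~\ref{item:core}; in both cases $\deg{\sunflower_i}{j}\le cn^{\max\{1,i-1\}/\locality}$.

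The only genuinely delicate point is the ``petal stabilisation'' identity $T\setminus K_{i-1}=T\setminus K_i$ for $T\in\sunflower_i$: it is what lets us charge the degree of $j$ in $\sunflower_i$ to its degree in the larger family $\mathcal{T}_{i-1}$, where the sharper threshold $cn^{(i-1)/\locality}$ applies, and it relies essentially on the nestedness of the kernels. Everything else is bookkeeping with the definitions in \cref{con:sequence}.
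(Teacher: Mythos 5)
Your proof is correct, and it takes a genuinely different route from the paper's. The paper argues by contradiction: assuming $\deg{\sunflower_i}{j} > cn^{(i-1)/\locality}$ for some $j\in[n]\setminus K_i$, it picks a witness set $T\in\sunflower_i$ containing $j$, builds the auxiliary set $T'$ of all points of $T$ that are either in $K_i$ or have high $\sunflower_i$-degree, shows $T'\subseteq K_{i-1}$ and $|T'|\geq \locality-(i-1)$, and concludes $|T\setminus K_{i-1}|\leq i-1$, so $T$ should already have been harvested into $\sunflower_{i-1}$ --- a contradiction. You instead run a direct argument in two clean steps: first, the structural observation that the kernels are nested, $K_i\subseteq K_{i-1}$ (a fact the paper never isolates, though $T\cap K_i\subseteq K_{i-1}$ is implicitly what it uses); second, the ``petal stabilisation'' identity $T\setminus K_{i-1}=T\setminus K_i$, obtained by sandwiching $i\leq |T\setminus K_{i-1}|\leq |T\setminus K_i|\leq i$, which immediately yields $j\notin K_{i-1}$ and hence the degree bound by the trivial inclusion $\sunflower_i\subseteq\mathcal{T}_{i-1}$. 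Your version is arguably more modular: the nestedness of kernels is a reusable structural fact about \cref{con:sequence}, and once it is in hand the claim follows without introducing the ad hoc set $T'$ or counting its elements. The paper's version is more local (it reasons only about the single witness $T$ and never states kernel nestedness as a standalone lemma), but reaches the same cardinality equality in a roundabout way. Both are correct and of comparable length; yours separates the combinatorial content into a monotonicity lemma plus a one-line degree comparison, which I find easier to verify.
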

  \begin{proof}
  For $i=1$, the claim follows directly from Item~\ref{item:core} of \cref{con:sequence}.
  
   Suppose towards contradiction that there exists $i\in \{2,3,\dots,\locality\}$ and $j\in [n]\setminus K_i$, such that $\mathrm{deg}_{\sunflower_i}(j) > cn^{(i-1)/\locality}$. Let $T$ be a set in $\sunflower_i$ such that $j\in T$. Let $T'$ be the subset of $T$ that consists of $T\cap K_i$ and every index $h\in T\setminus K_i$ such that $\mathrm{deg}_{\sunflower_i}(h) > cn^{(i-1)/\locality}$.
   
   The size of $T'$ is at least $|T\cap K_i| + 1$, because we assumed $T'$ has an element in $T\setminus K_i$. Hence, $|T'|\geq \locality-(i-1)$.
   We next show that
   \begin{equation*}
       \mathrm{deg}_{\sunflower_{i-1}}(h) > cn^{\frac{(i-1)}{\locality}} \enspace,
   \end{equation*}
   for every $h\in T'$. This implies that $T'\subseteq K_{i-1}$ and in turn that $|T\setminus K_{i-1}| = \locality - |T'| \leq i-1$.
   Thus, $T \in \sunflower_{i-1}$ and consequently, by Items~\ref{item:newF} and~\ref{item:newM} of \cref{con:sequence}, $T$ is not in $\sunflower_i$, in contradiction to our initial assumption.
   
   By Item~\ref{item:core} of \cref{con:sequence}, for every $h\in T' \cap K_i$,
   it holds that $\mathrm{deg}_{\mathcal{T}_i}(h) > cn^{i/\locality}$, which in turn implies that 
   \begin{equation*}
       \mathrm{deg}_{\mathcal{T}_{i-1}}(h) > cn^{i/\locality}> cn^{(i-1)/\locality} \enspace,
   \end{equation*}
   because $\mathcal{T}_i\subseteq \mathcal{T}_{i-1}$, by Item~\ref{item:newF} of \cref{con:sequence}.
   
   By item~\ref{item:newM}, $\sunflower_i \subseteq \mathcal{T}_i$, and we already deduced that $\mathcal{T}_i\subseteq \mathcal{T}_{i-1}$. Thus, $\sunflower_{i}\subseteq \mathcal{T}_{i-1}$ and therefore, for every $h\in T'\setminus K_i$ if $\mathrm{deg}_{\sunflower_i}(h) > cn^{(i-1)/\locality}$, then also $\mathrm{deg}_{\mathcal{T}_{i-1}}(h) > cn^{(i-1)/\locality}$, and the claim follows.
 \end{proof}
 
 Finally, we rely on \cref{clm:CoreSUB}, \cref{clm:partition}, and \cref{clm:External} to show that at least one of the daisies in in the sequence $\sunflower_1,\sunflower_2,\dots,\sunflower_\locality$ satisfies all of the conditions of the lemma.

 \begin{claim}
    For some $s \in [\locality]$, $\sunflower_s$ is a $cn^{\max\{1,(s-1)\}/\locality}$-daisy with a kernel $K$, such that  $\mu(\sunflower_s)\geq 1/\locality$, $|K| <\locality \cdot n^{1-s/\locality}$, and $|T\setminus K|\leq s$, for every $T\in \sunflower_s$.
 \end{claim}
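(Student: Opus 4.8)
The plan is to combine the three structural claims already established---\cref{clm:CoreSUB}, \cref{clm:partition}, and \cref{clm:External}---with a one-line pigeonhole argument over the weight distribution $\mu$. The crucial point is that the families $\sunflower_1,\sunflower_2,\dots,\sunflower_\locality$ \emph{cover} $\mathcal{T}$: this is precisely the content of \cref{clm:partition}, which states $\bigcup_{i\in[\locality]}\sunflower_i = \mathcal{T}_1 = \mathcal{T}$. (They are in fact pairwise disjoint, since by Item~\ref{item:newF} of \cref{con:sequence} each set leaves the pool $\mathcal{T}_i$ as soon as it enters some $\sunflower_i$, but we will not need disjointness.)

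Next I would invoke the hypothesis that $\mu$ is a distribution whose support is contained in $\mathcal{T}$. Then $\sum_{i\in[\locality]}\mu(\sunflower_i) \geq \mu\bigl(\bigcup_{i\in[\locality]}\sunflower_i\bigr) = \mu(\mathcal{T}) = 1$, so by averaging there exists some $s\in[\locality]$ with $\mu(\sunflower_s)\geq 1/\locality$. I fix this $s$, set $K \coloneqq K_s$, and claim $\sunflower_s$ (with kernel $K$) satisfies all the remaining requirements.

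It then only remains to read off the quantitative bounds from the earlier claims, instantiated at $i=s$. The kernel-size bound $|K| = |K_s| < \locality\cdot n^{1-s/\locality}$ is exactly \cref{clm:CoreSUB}. The petal-size bound $|T\setminus K| \leq s$ for every $T\in\sunflower_s$ is immediate from the defining condition of $\sunflower_s$ in Item~\ref{item:newM} of \cref{con:sequence}. Finally, by \cref{clm:External} every point $j\in[n]\setminus K_s$ satisfies $\deg{\sunflower_s}{j} \leq cn^{\max\{1,(s-1)\}/\locality}$, which is precisely the statement that $\sunflower_s$ is a $cn^{\max\{1,(s-1)\}/\locality}$-daisy with kernel $K_s$. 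Combining these with $\mu(\sunflower_s)\geq 1/\locality$ yields the claim, and hence completes the proof of \cref{lem:relaxed-sunflower}.

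I do not expect a genuine obstacle at this step: the entire substance of the daisy lemma has been front-loaded into \cref{clm:CoreSUB}, \cref{clm:partition}, and \cref{clm:External}, so this final claim amounts to bookkeeping plus an averaging argument. The only subtlety worth a sentence of care is the use of $\mu$ being a probability distribution supported on $\mathcal{T}$ (so that the masses $\mu(\sunflower_i)$ sum to at least $1$), which is what forces one family to have mass at least $1/\locality$.
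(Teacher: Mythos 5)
Your proposal is correct and follows essentially the same route as the paper: invoke the covering property from \cref{clm:partition} to get by averaging some $s$ with $\mu(\sunflower_s)\geq 1/\locality$, take $K=K_s$, and then read off the kernel-size bound from \cref{clm:CoreSUB}, the petal-size bound from Item~\ref{item:newM} of \cref{con:sequence}, and the degree bound (the \daisy{t} property) from \cref{clm:External}. The only difference is that you spell out the averaging step slightly more explicitly, which is harmless.
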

 
 \begin{proof}
   By Claim~\ref{clm:partition}, the sequence of daisies $\sunflower_1,\sunflower_2,\dots,\sunflower_\locality$ covers $\mathcal{T}_1$, i.e., $\bigcup_{j\in [\locality]}\sunflower_j = \mathcal{T}_1$. Therefore, there exists $s \in [\locality]$ such that $\mu(\sunflower_s)\geq 1/\locality$.
    We take $K$ to be the set $K_s$.
    According to the construction of $K_s$, for every $T\in \sunflower_s$ it holds that $|T\setminus K_s|\leq s$, and
     by Claim~\ref{clm:CoreSUB}, we have that $|K_s| < \locality n^{1-s/\locality}$. Finally, by Claim~\ref{clm:External}, $\mathrm{deg}_{\sunflower_s}(j) \leq cn^{\max\{1,(s-1)\}/\locality}$, for every $j\in [n]\setminus K_s$.
 \end{proof}
 This concludes the proof of \cref{lem:relaxed-sunflower}.
 
\subsection{Proof of \cref{clm:non-adaptive}}
\label{apx:non-adaptive}
Let $D$ be an adaptive $\locality$-local relaxed decoder for the code $C \colon \bitset^k \to \bitset^n$. We show that $C$ also has a \emph{non-adaptive} $2^\locality$-local relaxed decoder with the same decoding radius.

Fix $i \in [k]$. Note that the adaptive decoder $D$ can be viewed as a distribution over binary decision trees; that is, $D(i)$ first tosses coins to obtain a random string $\rho$ which  determines the binary decision tree that $D$ uses deterministically to determine the output $D(i)$ given query access to input $w$.

Recall that every non-leaf vertex of a decision tree is labelled by an index in $[n]$, and one of the edges leaving it towards a child is labelled $0$ and the other is labelled $1$. The relaxed local decoder $D$ uses a decision tree, which it selects at random, by starting from the root of the tree, reading its label $i$ and querying $w_i$. It then proceeds to the child of the root with the edge corresponding to the value of $x_i$. The child is treated in the same manner as the root if it is an internal vertex, and if it is a leaf, its label takes value in $\{0,1,\bot\}$, which is the output of $D(i)$.

Let $D'$ be an algorithm that operates as follows. Given explicit input $i \in [k]$, it first tosses coins, exactly like $D(i)$, to obtain a random string $\rho$ which  determines the binary decision tree that $D(i)$ uses. Then, it queries \emph{all} of the indices labelling the vertices of the decision tree (corresponding to all possible queries that $D(i)$ might have been given any possible input $w$). Finally, $D'$ uses the query values to compute the answer $D(i)$ would have returned and returns it. Note that there are $2^\locality$ such labels.

The choice of queries of $D'$ depends only on the decision tree chosen, hence it is non-adaptive. The query complexity of $D'$ is $2^\locality$ and it returns the exact same answer as $D$ would on the same input and random coin tosses. This concludes the proof of \cref{clm:non-adaptive}.

\subsection{Proof of \cref{clm:amp}}
\label{apx:amp}
Let $D$ be an $\locality$-local relaxed decoder for the code $C \colon \bitset^k \to \bitset^n$, which errs with probability at most $1/3$. We use $D$ to construct an $O(\locality \cdot \log(1/\eps))$-local relaxed decoder $D'$ that errs with probability at most $\epsilon$ and has perfect completeness.

On input $i\in [k]$ and query access to a string $w \in \bitset^n$, the relaxed local decoder $D'$ operates as follows:
\begin{enumerate}
    \item invoke $\log(1/\eps)$ parallel executions of $D$ with the same input parameters;
    \item when the invoked executions attempt to query $w$, the relaxed local decoder $D'$ collects all of the queries and asks them simultaneously, returning the values of the queries to the executions that requested them;
    \item Finally, $D'$ collects all the outputs from the executions and returns $b\in \{0,1\}$ if all the executions returned $b$, and otherwise returns $\bot$.
\end{enumerate}
Note that all the queries are used in a non-adaptive manner, and that there are at most $\locality \cdot \log(1/\eps)$ of them.

By the completeness of the original relaxed local decoder $D$, if $w$ is a valid codeword, then $b = x_i$ with probability $1$. Hence, $D'$ has perfect completeness. On the other hand, if $w$ is $\decrad$-close to $C$, then by the relaxed decoder condition of relaxed LDC, it holds that $x \not\in \{x_i,\bot\}$ with probability less than $(1/3)^{\log(1/\eps)} < \eps$. Thus, $D'$ has the desired amplified soundness. This concludes the proof of \cref{clm:amp}.

\subsection{Proof of \cref{clm:rand_reduction}}
\label{apx:rand_reduction}
Let $D$ be a $\locality$-local relaxed decoder for a \emph{binary} code $C$ with $\locality = O(1)$, randomness complexity $r$, and constant error probability $\eps$. We use $D$ to construct a relaxed decoder $D'$ with the same parameters as $D$, except it has locality $O(\locality)$ and randomness complexity $\log(n) + O(1)$.

Fix $i \in [k]$. Consider a $2^r \times 2^{n}$ matrix where the rows correspond to all possible random strings $\gamma$ used by the relaxed local decoder and the columns correspond to all inputs $w \in \bitset^n$ that are within the decoding radius of $C$. The entry $(\gamma,w)$ of the matrix corresponds to the output of $D^w(i ; \gamma)$, that is, the output of the relaxed local decoder when given query access to $w$ and random coins $\gamma$.

Note that for every codeword $w = C(x)$, by the perfect completeness of
$D(i)$ the value of each entry in a $w$ column equals the correct message value $x_i$. By the relaxed decoding condition of $D(i)$, for each $w$ column that is within the decoding radius of $C$, at least $1-\eps$ fraction of the entries are in $\set{x_i,\bot}$. 
  
We show that there exists a multi-set, $S$, of size $O(n)$ of
the rows such that the every column $w$ restricted to $S$ has at most $O(\eps)$ fraction of entries taking the wrong value $\neg x_i$. Thus, we obtain a relaxed local decoder $D'$ that uses only $\log_2 |S| = \log{n} + O(1)$ random coins, by simply running the original decoder $D$ but with respect to random coins selected uniformly from $S$ (rather than from $\bitset^r$). To obtain soundness error $\eps$ we use $O(1)$ parallel repetitions.

We use the probabilistic method to show the existence of a small multi-set $S$ as above. Consider a multi-set $S$ of the rows, of size $t$, chosen uniformly at random and fix input $w$. By the Chernoff bound, with probability $2^{-\Omega(t)}$ over the choice of $S$, at most $O(\eps)$ fraction of entries of $w$ restricted to $S$ take the wrong value $\neg x_i$. Thus, by setting $t = \log(2^n) + O(1)$ and applying the union bound, we obtain that there exists a multi-set $S$ as desired. Since the new relaxed local decoder selects at random from $S$, it can be implemented using $\log_2{t} +O(1)$ random coins. This concludes the proof of \cref{clm:rand_reduction}.

\end{document}